\newif\iflongversion
\longversiontrue

\documentclass{article}
\pdfoutput=1

% Packages
\usepackage{amsmath}
\usepackage{amsthm, thmtools}
\usepackage{amsfonts}
\usepackage{mathrsfs}
\usepackage{amssymb}
\usepackage{bm}
\usepackage{mathtools}
\usepackage{physics}
\usepackage{stmaryrd}
\usepackage[dvipsnames]{xcolor}
\usepackage[pdfusetitle,bookmarksnumbered,colorlinks,linkcolor=MidnightBlue,citecolor=olive]{hyperref}
\usepackage[capitalise,nameinlink]{cleveref}
\usepackage[british]{babel}
\usepackage{textcomp}

% Drafts
%\newcommand{\tvdl}[1]{{\sffamily\small\color{orange}{[tvdl: #1]}}}
%\newcommand{\tvdll}[1]{{\sffamily\small\color{pink}{[later: #1]}}}
%\newcommand{\tvdll}[1]{}

% Appearance
\usepackage[a4paper, margin=3.2cm]{geometry}
\usepackage[utf8]{inputenc}
\DeclareFontSeriesDefault[rm]{bf}{b}
\newcommand{\defn}[1]{{\normalfont\textbf{#1}}}
\usepackage{packages/tikzit}
% TiKZ style file generated by TikZiT. You may edit this file manually,
% but some things (e.g. comments) may be overwritten. To be readable in
% TikZiT, the only non-comment lines must be of the form:
% \tikzstyle{NAME}=[PROPERTY LIST]

% Node styles
\tikzstyle{label}=[font={\footnotesize}, text height=1ex, text depth=0.15ex]
\tikzstyle{map}=[draw, shape=rectangle, inner sep=2pt, minimum height=4mm, fill=white, minimum width=5mm]
\tikzstyle{tiny map}=[draw, shape=rectangle, inner sep=1pt, minimum height=3mm, fill=white, minimum width=3mm, font={\footnotesize}]
\tikzstyle{small map}=[draw, shape=rectangle, inner sep=1pt, minimum height=4mm, fill=white, minimum width=4mm]
\tikzstyle{medium map}=[draw, shape=rectangle, inner sep=2pt, minimum height=4mm, fill=white, minimum width=9mm, tikzit fill=red]
\tikzstyle{large map}=[draw, shape=rectangle, inner sep=2pt, minimum height=4mm, fill=white, minimum width=18mm, tikzit fill=blue]
\tikzstyle{upground}=[circuit ee IEC, thick, ground, rotate=90, scale=1.5, tikzit fill=black, tikzit shape=rectangle]
\tikzstyle{downground}=[circuit ee IEC, thick, ground, rotate=-90, scale=1.5, tikzit fill=black, tikzit shape=rectangle]
\tikzstyle{downgroundnorm}=[circuit ee IEC, thick, ground, rotate=-90, scale=1.5, fill=white, tikzit shape=rectangle, tikzit fill=red]
\tikzstyle{object}=[text depth=0.15ex]
\tikzstyle{small label}=[label, font={\scriptsize}]

% Edge styles
\tikzstyle{dir}=[->]
\tikzstyle{mapsto}=[{|->}]
\tikzstyle{dashdir}=[->, dashed]
\tikzstyle{merge}=[-, dashed, draw=lightgray]
\tikzstyle{map}=[draw=lightgray, {|->}]

\newcommand{\tikzfigpad}[1]{\quad\tikzfig{#1}\quad}
\usepackage{enumerate}
\usepackage{caption}
\usepackage{subcaption}
\usepackage{placeins} % provides command \FloatBarrier

% Bibliography
\usepackage[style=numeric-comp, % or: draft
    sortcites=true,
%    giveninits=true, % Use initials in bibliography
    useprefix=true,
%    sorting=nyt,
    sorting=none,
    maxbibnames=9,
    doi=false,isbn=false,url=false,eprint=true]{biblatex}

%%%%%%%%%%% Authors
\DeclareSortingNamekeyTemplate{
    \keypart{\namepart{family}}
    \keypart{\namepart{prefix}} % Sorting 'van der' etc. correctly
    \keypart{\namepart{given}}
    \keypart{\namepart{suffix}}
}
%\DeclareNameAlias{author}{family-given} % display last name first in bibliography
\renewbibmacro{begentry}{\midsentence}  % Hack to prevent capitalisation of e.g. 'van Dalen' in bibliography

%%%%%%%%%%% Title
\newbibmacro{string+doiurl}[1]{%  See https://tex.stackexchange.com/questions/48400/biblatex-make-title-hyperlink-to-dois-url-or-isbn
  \iffieldundef{doi}{%
    \iffieldundef{url}{%
      #1
    }{%
      \href{\thefield{url}}{#1}%
    }%
  }{%
    \href{http://dx.doi.org/\thefield{doi}}{#1}%
  }%
}
\DeclareFieldFormat{title}{\usebibmacro{string+doiurl}{\mkbibemph{#1}}}
\DeclareFieldFormat[thesis]{title}{\usebibmacro{string+doiurl}{\mkbibemph{#1}}} % somehow thesis is not taken along in the above general command
\DeclareFieldFormat[article,misc]{title}{\usebibmacro{string+doiurl}{\mkbibquote{#1}}}
\DeclareFieldFormat{titlecase}{#1} % Hoofdletters in titels behouden

%%%%%%%%%%% Year and date
\AtEveryBibitem{\clearfield{month}} % remove month field from bib file
\AtEveryBibitem{\clearfield{day}} % remove day field from bib file

%%%%%%%%%%% Miscellaneous
\DeclareFieldFormat{postnote}{#1} % No 'p.' before postnotes in \cite[<postnote>]{ref}. See biblatex manual p147
\renewbibmacro{in:}{\ifentrytype{article}{}{\printtext{\bibstring{in}\intitlepunct}}} % Remove 'In:' from @article entries
\addbibresource{bibliography.bib}
\addbibresource{forthcoming.bib}

% Math macros
\newcommand{\labelsA}{A}
\newcommand{\labelsB}{B}
\newcommand{\lA}{a}
\newcommand{\lB}{b}
\newcommand{\cAg}{\bar{\labelsA}}
\newcommand{\cBg}{\bar{\labelsB}}
\newcommand{\powset}{\mathcal{P}}
\newcommand{\gC}{\mathfrak{c}}  % `g' for `Galois'
\newcommand{\gP}{\mathfrak{p}}
\newcommand{\ChG}{\ensuremath{G}}
\newcommand{\PaG}{\ensuremath{G^{-1}}}
\newcommand{\lap}{\la_{\cls{G}}}
\newcommand{\mup}{\mu_{\cls{G}}}
\newcommand{\card}[1]{|#1|}
\let\a\alpha
\let\b\beta
\let\la\lambda
\let\phi\varphi
\let\<\langle
\let\>\rangle

\let\bigjoin\bigvee

\let\bigmeet\bigwedge
\DeclareMathOperator{\id}{id}
\DeclareMathOperator{\im}{im}
\newcommand{\relG}{G}
\newcommand{\cls}[1]{\ensuremath{{B_{#1}}}} % basic circuit
\newcommand{\latt}[1]{\ensuremath{{L_{#1}}}} % concept lattice
\newcommand{\lattA}{\powset^c(\labelsA)}
\newcommand{\lattB}{\powset^c(\labelsB)}
\newcommand{\lattG}{\latt{G}}
\newcommand{\lattBop}{\powset^c(\labelsB)^\text{op}}
 % short iff
\newcommand{\incomp}{\mathrel{\#}} % Or \parallel
\newcommand{\cat}[1]{\mathsf{#1}}
\newcommand{\CircuitPreOrder}{\cat{Circ}_{AB}^{\preceq}}

\newcommand{\itmref}[1]{(\ref{#1})}

\newcommand{\cccG}{C_3}
\DeclareMathOperator{\End}{End}

\newcommand{\canonicalcircuit}{concept lattice}
\newcommand{\primitivecircuit}{basic circuit}

% Using thmtools package. `refname' and `Refname' overrides the `capitalise' option of cleveref, so capitalise all of them by default.
\theoremstyle{definition}
\declaretheorem[name=Definition,refname={Definition,Definitions},Refname={Definition,Definitions},qed=$\sslash$]{definition}
\declaretheorem[name=Example,refname={Example,Examples},Refname={Example,Examples},qed=$\sslash$,numberlike=definition]{example}
\declaretheorem[style=remark,name=Remark,numbered=no,qed=$\sslash$,numberlike=definition]{remark}

\theoremstyle{plain}
\declaretheorem[name=Theorem,refname={Theorem,Theorems},Refname={Theorem,Theorems},numberlike=definition]{theorem}
\declaretheorem[name=Proposition,refname={Proposition,Propositions},Refname={Proposition,Propositions},numberlike=definition]{proposition}

\declaretheorem[name=Corollary,refname={Corollary,Corollaries},Refname={Corollary,Corollaries},numberlike=definition]{corollary}

\title{\bfseries An order-theoretic circuit syntax and characterisation of the concept lattice}

\author{Tein van der Lugt \\[.7em] \normalsize Department of Computer Science, University of Oxford}
\date{\today}

\begin{document}

    \maketitle

    \begin{abstract}
        We take an order-theoretic approach to circuit (string diagram) syntax, treating a circuit as a partial order with additional input-output structure.
        We define morphisms between circuits and prove a factorisation theorem showing that these can, in the finite case, be regarded as formalising a notion of syntactical circuit rewrites, with quotient maps in particular corresponding to gate composition.
        We then consider the \emph{connectivity} of a circuit, expressed as a binary relation between its inputs and outputs, and characterise the \emph{concept lattice} from formal concept analysis as the unique smallest circuit that admits morphisms from all other circuits with the same connectivity.
        This has significance for quantum causality, particularly to the study of causal decompositions of unitary transformations.
        We close by constructing the circuit characterised by the dual statement.
    \end{abstract}

    \section{Introduction}\label{sec:intro}
This paper is about the syntax of \emph{circuits}, or \emph{string diagrams}, which describe parallel and sequential compositions of processes in symmetric monoidal categories (SMCs).
A particular motivating application for this work, however, lies in causal inference and quantum foundations, where classical or quantum circuits may act as models of causal processes.
In this context, the syntax of a circuit---i.e.\ its `shape'---alone implies relevant information about the process (morphism of the SMC) it implements: for example, the absence of a path between an input and an output of the circuit prohibits the flow of causal influence between those systems.
Generally, one may wonder about the full class of morphisms in a specified SMC---be it classical probability theory, quantum theory, or any other---that can be expressed by a circuit with a given shape, a type of problem sometimes referred to as \emph{causal compatibility}.%
\footnote{See e.g.~\cite{WSF19}. In quantum and classical causal modelling, causal structures are often represented by directed acyclic graphs; however, string diagrams may equally well be used~\cite{JKZ19}.}
A circuit shape $Q$ may then be considered `at least as expressive' as a shape $P$ if it can express at least as many morphisms of the SMC.
Sometimes, this is the case simply because any circuit with shape $P$ can be rewritten into one with shape $Q$ using purely syntactical operations valid in any SMC, such as composing two or more gates, treating them as a single one; or adding trivial gates and trivial wires.

Here, we propose an order-theoretic treatment of circuit syntax and such syntactical rewrites.
We then consider the input-output \emph{connectivity} of a circuit and characterise the \emph{concept lattice}~\cite{Birk67,Wille82,GW24} as the most expressive circuit with a given connectivity.
This result has particular relevance to the search for \emph{causal decompositions} in quantum theory~\cite{LB21,pic,VMA25b}.
These are circuit decompositions of a unitary transformation where certain no-influence conditions satisfied by the unitary are reflected in the connectivity of the circuit.
While the general existence of such circuit decompositions remains an open problem, the present result demonstrates the existence of a canonical shape for those that do exist---the concept lattice---thereby answering a question posed in~\cite{LB21}.
In upcoming work~\cite{pic} this syntactic result is combined with a semantic (operator-algebraic) proof to construct causal decompositions and characterise precisely when a particular type of them exists.

This paper is structured as follows.
In \cref{sec:first-definitions}, we propose a definition of a \emph{circuit} as a partial order $P$ supplemented by two maps ${\la:\labelsA\to P}$, $\mu:\labelsB\to P$ into it.
We also introduce the relevant structure-preserving maps, which we call \emph{circuit morphisms}.
In case $\labelsA$, $\labelsB$, and $P$ are finite, the elements of $P$ can be thought of as gates in a circuit diagram, the elements of $\labelsA$ and $\labelsB$ as the overall input and output wires, and $\la,\mu$ as providing the locations of these wires in the circuit diagram.
However, all results in this work, with the exception of \cref{prop:decomposition-into-atomic-quotients}, hold in the infinite case, too.

\Cref{sec:why-rewrites} examines the structure of circuit morphisms.
We recall the theory of quotients of partially ordered sets by \emph{compatible congruences} studied in~\cite{Sturm72, Sturm73, Sturm77, KRS05,Will24} and show in \cref{thm:morphism-decomposition-2} that each circuit morphism factorises into a quotient map and a number of other elementary types of morphisms.
For the finite case, this section serves to show that circuit morphisms capture an appropriate notion of syntactical circuit rewrites, with quotient maps in particular corresponding to gate composition.

Finally, \cref{sec:connectivity} focusses on the \emph{connectivity} of a circuit, which is a binary relation $G\subseteq \labelsA\times\labelsB$ specifying which inputs can reach which outputs via directed paths through the circuit.
We recall from \textcite{Birk67} that out of any given binary relation $G\subseteq\labelsA\times\labelsB$ one may construct a complete lattice, called the \emph{concept lattice} in formal concept analysis~\cite{Wille82,GW24}; supplemented with the natural choice of maps $\la,\mu$ into it, it defines a circuit $\latt{G}$.
Our main result, \cref{thm:main-thm}, is a characterisation of $\latt{G}$ as the smallest circuit with connectivity $G$ that admits morphisms from all other circuits with the same connectivity.
In the sense described above, this shows in particular that $\latt{G}$ is most expressive, in terms of SMC morphism expressibility, among all circuits with connectivity $G$.
    \section{First definitions}\label{sec:first-definitions}
A preorder on a set $P$ is a relation $\leq\,\subseteq P\times P$ that is (i) \emph{reflexive}: $\forall p\in P: p\leq p$ and (ii) \emph{transitive}: $\forall p,q,r\in P: p\leq q\land q\leq r\implies p\leq r$; a (partial) order is one that is also (iii) \emph{antisymmetric}: $\forall p,q\in P: p\leq q \land q\leq p \implies p=q$.
We write $p<q$ iff $p\leq q$ and $p\neq q$; $p\incomp q$ iff $p\nleq q$ and $q\nleq p$; and $p\lessdot q$ iff $p$ \emph{covers} $q$ from below, that is, if $p < q$ and there is no $r\in P$ so that $p < r < q$.

\begin{definition}
    \label{def:circuit}
    Let $\labelsA$ and $\labelsB$ be sets.
    A \defn{circuit} from $\labelsA$ to $\labelsB$ is a tuple $(P,\leq,\la,\mu)$ of a set $P$, a partial order $\leq$ on $P$, and two functions $\la: \labelsA \to P$ and $\mu:\labelsB \to P$ called the \defn{input map} and \defn{output map}, respectively.
    We will often refer to elements of $P$ as \emph{gates} and to elements of $\labelsA$ and $\labelsB$ as \emph{inputs} and \emph{outputs}, respectively.
    $P$ may also refer to a circuit itself, rather than only its underlying set; in this case we write $\leq$ or $\leq_P$ and $\la_P, \mu_P$ for the order and maps constituting the circuit on $P$.
\end{definition}

\begin{definition}
    We call a circuit $P$ \defn{finite} if the set $P$ as well as the sets $\labelsA$ and $\labelsB$ are.
    Finite circuits may be conveniently depicted by diagrams.
%        First of all, write $p\lessdot q$ if $p$ covers $q$ from below: that is, if $p\leq q$ and there is no $r\in P$ such that $p<r<q$.
    First of all, the \emph{Hasse diagram} of the partial order $(P,\leq)$ is the directed acyclic graph on $P$ obtained by drawing an edge from $p$ to $q$ iff $p \lessdot q$.
    The elements of $P$ are laid out on the page so that all edges point upward; the latter are then drawn as undirected edges.
    The \defn{circuit diagram} of $(P,\leq,\la,\mu)$ is obtained by supplementing the Hasse diagram with nodes for each $\lA\in\labelsA$ and $\lB\in\labelsB$ and upward-directed edges from $\lA$ to $\la(\lA)$ and from $\mu(\lB)$ to $\lB$.
    We will often leave the gates of the circuit unlabelled, in which case the diagram determines the circuit up to circuit isomorphisms (defined imminently).
\end{definition}

\begin{example}
    \label{ex:circuit-diagram}
    The diagram
    \begin{equation*}
        \tikzfig{circuits/circuit-diagram-example}
    \end{equation*}
    denotes the circuit $(P,\leq,\la,\mu)$ where $P = \{p,q,r,s\}$; $\leq$ is the unique partial order satisfying $q<p$, $q<r<s$, $p\incomp r$ and $p\incomp s$;
    the sets of inputs and outputs are $\labelsA = \{\lA_1,\lA_2,\lA_3\}$ and $\labelsB = \{\lB_1,\lB_2\}$; and, finally, the input map $\la:\labelsA\to P$ acts as $\lA_1\mapsto p,\lA_2\mapsto r,\lA_3\mapsto s$ and the output map $\mu:\labelsB\to P$ as $\lB_1\mapsto p$ and $\lB_2\mapsto s$.%
%        and output maps are defined by $\la(\lA_1) =\mu(\lB_1)= p$, $\la(\lA_2) = r$, and $\la(\lA_3)=\mu(\lB_2)=s$.%
    \footnote{\label{fn:transitive-reduction}Other treatments of circuits and string diagram syntax are based on directed (acyclic) graphs, or similar structures, from the outset (see e.g.\ \cite{Bon+22a}). Such an approach allows for a distinction between circuits with identical transitive reductions, i.e.\ that induce the same partial order among their gates. In \cref{ex:circuit-diagram}, for instance, one could allow for an additional wire directly from gate $q$ to gate $s$. Our order-based formalism here does not make such distinctions. They are not relevant to the question of expressivity described in the Introduction, since the wire from $q$ to $s$ can just as well be incorporated into the gate $r$.}
\end{example}

\begin{definition}
    \label{def:morphism}
    Let $P,Q$ be circuits from $\labelsA$ to $\labelsB$.
    We call a map $f: P\to Q$ a \defn{(circuit) morphism} if it satisfies the following properties.
    \begin{enumerate}[(i)]
        \item $f$ is \defn{order-preserving}: for all $p,p'\in P$, $p \leq p' \implies f(p) \leq f(p')$.
        \item $f$ \defn{respects inputs}: for all $\lA\in\labelsA$, $f(\la_P(\lA)) \geq \la_Q(\lA)$.
        \item $f$ \defn{respects outputs}: for all $\lB\in\labelsB$, $f(\mu_P(\lB)) \leq \mu_Q(\lB)$.
    \end{enumerate}
    Moreover, we say $f$ is a \defn{(circuit) embedding}, and write $f:P\hookrightarrow Q$, if it is a circuit morphism and an order embedding (i.e.\ $f(p) \leq f(p') \implies p\leq p'$).
    A \defn{(circuit) isomorphism} $f:P \xrightarrow{\sim} Q$ is a bijective circuit morphism whose inverse is also a circuit morphism.
\end{definition}

\begin{remark}
    We regard the input and output sets $\labelsA$ and $\labelsB$ as (arbitrary but) fixed; two circuits can be isomorphic only if they are defined on identical input and output sets.
    Moreover, two circuits that differ only by a permutation on $\labelsA$ or $\labelsB$ may not be isomorphic.

    Furthermore, if $f: P\to Q$ is bijective and a circuit embedding, then it is automatically an \emph{order} isomorphism (that is, an order-preserving map with order-preserving inverse); however, its inverse $f^{-1}$ is only a circuit morphism---and $f$ therefore only a \emph{circuit} isomorphism---if it also satisfies the \emph{equalities} $f\circ\la_P = \la_Q$ and $f\circ\mu_P = \mu_Q$.
    The map $f$ illustrated below is an example of a bijective circuit embedding that fails this property: it satisfies $f(\mu_P(\lB_1)) < \mu_Q(\lB_1)$, so its inverse is not a circuit morphism.
    \begin{equation}
        \label{eq:embedding-but-not-isomorphism}
        \tikzfigpad{circuits/not-isomorphism}.\qedhere
    \end{equation}
\end{remark}

We will often be interested not so much in the properties of a particular morphism $P \to Q$, but rather in whether such a morphism exists in the first place.
Since circuit morphisms can be regarded as syntatical rewriting procedures (as we will argue in the next section), the existence of such a morphism indicates that $Q$ is at least as expressive as $P$, in the sense described in the Introduction.
Therefore we define the following.

\begin{definition}\label{def:preorder}
    Denote by $\CircuitPreOrder$ the class of circuits from $A$ to $B$ preordered by the relation $\preceq$ so that $P\preceq Q$ iff there exists a morphism $P \to Q$.
    When both $P\preceq Q$ and $Q\preceq P$, we will say that $P$ and $Q$ are \defn{syntactically equivalent}.
\end{definition}

Syntactical equivalence of $P$ and $Q$ is strictly weaker than the existence of a circuit isomorphism between $P$ and $Q$; see for instance \cref{ex:syntatical-equivalence} below.
First, note the following, which is useful for showing syntactical \emph{in}equivalence.

\begin{definition}
    Let $(P,\leq,\la,\mu)$ be a circuit.
    For $p\in P$, we write
    \begin{equation}
        \label{eq:ppluspminus}
%            \down p &\coloneqq \{q\in P: q \leq p\}; \\
%            \up p &\coloneqq \{q\in P: q \geq p\}; \\
        p^- \coloneqq \{ \lA \in \labelsA : \la(\lA) \leq p \} \text{\qquad and\qquad} p^+ \coloneqq \{ \lB \in \labelsB : \mu(\lB) \geq p \}.
    \end{equation}
    That is, $p^-$ are the inputs that lie to $p$'s past, $p^+$ the outputs lying to $p$'s future.
\end{definition}

\begin{proposition}
    \label{prop:morphisms-respect-plusminus}
    If $f:P\to Q$ is a circuit morphism, then for any $p\in P$, $p^- \subseteq f(p)^-$ and $p^+ \subseteq f(p)^+$.
\end{proposition}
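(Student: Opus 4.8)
The plan is to prove the two inclusions separately by unfolding the definitions in \cref{eq:ppluspminus} and chaining the defining conditions of a circuit morphism from \cref{def:morphism}. Both inclusions follow the same two-step template: first use order-preservation to transport an order relation across $f$, and then use the relevant input- or output-respecting condition to insert $\la_Q$ or $\mu_Q$ at the appropriate end, so that transitivity of $\leq$ closes the argument. No case analysis or auxiliary construction is needed; the whole statement is essentially a diagram chase inside the target poset $Q$.

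For the first inclusion $p^- \subseteq f(p)^-$, I would fix $\lA \in p^-$, which by definition means $\la_P(\lA) \leq p$. Applying order-preservation gives $f(\la_P(\lA)) \leq f(p)$, while the input-respecting condition supplies $\la_Q(\lA) \leq f(\la_P(\lA))$. Transitivity then yields $\la_Q(\lA) \leq f(p)$, which is exactly the assertion $\lA \in f(p)^-$. The second inclusion $p^+ \subseteq f(p)^+$ is handled dually: fixing $\lB \in p^+$ means $p \leq \mu_P(\lB)$, whence order-preservation gives $f(p) \leq f(\mu_P(\lB))$, the output-respecting condition gives $f(\mu_P(\lB)) \leq \mu_Q(\lB)$, and transitivity delivers $f(p) \leq \mu_Q(\lB)$, i.e.\ $\lB \in f(p)^+$.

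I do not expect a genuine obstacle here, as the proof is a short chain of inequalities in each case. The one point deserving a moment's care is the orientation of the inequalities in the two morphism conditions: respecting inputs demands $f(\la_P(\lA)) \geq \la_Q(\lA)$, an inequality pointing upward toward $f(p)$, whereas respecting outputs demands $f(\mu_P(\lB)) \leq \mu_Q(\lB)$, pointing downward away from $f(p)$. These are precisely the directions that allow the transported relations to land inside the defining inequalities of $f(p)^-$ and $f(p)^+$; were either condition reversed, the chaining would break, so verifying that the directions match serves as a useful sanity check on \cref{def:morphism} itself.
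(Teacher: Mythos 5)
Your proof is correct and follows exactly the same route as the paper's: unfold the definition of $p^-$, chain $\la_Q(\lA) \leq f(\la_P(\lA)) \leq f(p)$ using the input-respecting and order-preserving conditions, and argue dually for $p^+$. No issues.
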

\begin{proof}
    Suppose $\lA\in p^-$: that is, $\la_P(\lA) \leq p$.
    Since $f$ is a circuit morphism, we have $\la_Q(\lA) \leq f(\la_P(\lA)) \leq f(p)$.
    Therefore $\lA \in f(p)^-$.
    The inclusion $p^+ \subseteq f(p)^+$ follows similarly.
\end{proof}

\begin{example}
    \label{ex:nonlocal-comp}
    Let $\labelsA = \{\lA_1,\lA_2\}$ and $\labelsB = \{\lB_1, \lB_2\}$ and consider the circuits
    \begin{equation*}
        \label{eq:nonlocal-comp}
        P \coloneqq \tikzfig{circuits/bipartite-no-interaction} \qquad\text{and}\qquad Q \coloneqq \tikzfig{circuits/bipartite-interaction}.
    \end{equation*}
    The map $f:P \to Q$ that sends all four elements $a,b,c,d$ of $P$ to $q \in Q$ is clearly a circuit morphism.
%        However, note that $q^- = \{A_1,A_2\}$ and $q^+ = \{B_1, B_2\}$, while $P$ has no gate with that property.
%        By \cref{prop:morphisms-respect-plusminus}, therefore, there is no circuit morphism from $Q$ to $P$.%
    However, there is no circuit morphism from $Q$ to $P$.
    Note that $Q$ has a gate $q$ (its only element) satisfying $q^- = \labelsA$ and $q^+ = \labelsB$.
    By \cref{prop:morphisms-respect-plusminus}, having such a gate is a property of $Q$ that must be preserved by circuit morphisms $Q\to P$; however, $P$ does not have any such gate.
    $P$ and $Q$ are therefore strictly ordered as $P \prec Q$.%
    \footnote{\label{fn:CNOT}These two circuits have particular significance to classical and quantum causation.
    In the `classical' SMC of stochastic maps, $P$ and $Q$ are equally expressive: any stochastic map between sets $\lA_1\times \lA_2$ and $\lB_1\times \lB_2$ can be expressed as the composition of four stochastic maps as in $P$.
    This relies on the ability to clone classical information and underlies the procedure of \emph{exogenisaton} of latent variables in classical causal reasoning~\cite{Evans16}.
    In quantum theory, however, where information cannot be cloned, there are bipartite quantum channels that rely on an irreducible type of \emph{interaction} between all four systems $\lA_1,\lA_2,\lB_1,\lB_2$ which cannot be simulated by circuits of the form $P$ (see e.g.~\cite{SW12}).
    This \emph{quantum} inequivalence of $P$ and $Q$, which has consequences for cryptography~\cite{Buhr+14} and holography~\cite{May19}, thus reflects the \emph{syntactical} inequivalence witnessed by the inexistence of a morphism $Q\to P$.}
\end{example}

\begin{example}
    \label{ex:syntatical-equivalence}
    The two circuits in \cref{eq:embedding-but-not-isomorphism} are not isomorphic but they do admit morphisms in both directions.
    (The morphism $f$ in \cref{eq:embedding-but-not-isomorphism} provides one from left to right; a morphism in the other direction is given by mapping both elements of the circuit on the right to the bottom element of the circuit on the left.)
    All these three circuits are therefore syntactically equivalent yet not isomorphic.
\end{example}

    \section{Circuit morphisms}\label{sec:why-rewrites}
The main purpose of this section, which is independent of our main result in \cref{sec:connectivity}, is to show that in the case of finite circuits, circuit morphisms provide a natural notion of syntactical circuit rewrites.
We do this by showing in \cref{thm:morphism-decomposition-2} that any morphism factorises into morphisms of particular elementary types, all of which can clearly be interpreted as valid syntactical rewrites in the finite case.
However, we refrain from making a formal connection with symmetric monoidal categories, keeping the focus instead on the order theory, and all notions and results, with the exception of \cref{prop:decomposition-into-atomic-quotients}, are defined and hold in the infinite case as well.
We start with some background on quotients of partial orders~\cite{Sturm72, Sturm73, Sturm77, KRS05, Will24}.

\begin{definition}
    \label{def:quotient-order}
    Let $(P,\leq)$ be a partial order and $\theta$ an equivalence relation on $P$.
    Denote by $[p]$ the equivalence class of $p\in P$ under $\theta$ and define the binary relation $\leq_\theta$ on the quotient set $P/\theta$ by
    \begin{equation}
    [p]
        \leq_\theta [q]  \quad:\!\iff\quad \exists p'\in [p],q' \in [q] \text{ such that } p' \leq q'.
    \end{equation}
    In other words, it is the minimal relation on $P/\theta$ making the quotient map $\pi_\theta : P \to P/\theta, p\mapsto [p]$ order-preserving.
\end{definition}

While the relation $\leq_\theta$ so defined is always reflexive, it may not be antisymmetric or transitive, and thus not necessarily a partial order.

\begin{example}
    \label{ex:non-compatible-congruence}
    If $P = \{p,q,r\}$, $p<q<r$, and the equivalence classes of $\theta$ are $\{p,r\}$ and $\{q\}$, then $\leq_{\theta}$ is not antisymmetric, as we have $\{p,r\} \leq_\theta \{q\} \leq_\theta \{p,r\}$.
    Moreover, if $P = \{p,q,r,s\}$, $p<q$, $r<s$, and $\{p,q\}$ and $\{r,s\}$ are $\leq$-incomparable, then the order $\leq_\theta$ defined by the equivalence relation $\theta$ with equivalence classes $\{p\},\{r,s\},\{q\}$ is not transitive.
\end{example}

The following notion has been studied in detail in Refs.~\cite{Sturm72, Sturm73, Sturm77, KRS05}, amongst other works.

\begin{definition}
    \label{def:compatible-congruence}
    An equivalence relation $\theta$ on $(P,\leq)$ is a \defn{compatible congruence}%
    \footnote{We take this terminology from~\cite{Will24,Stan86}; this has also simply been called an \emph{order congruence}~\cite{KRS05}.}
    if the transitive closure $\overrightarrow{\leq_\theta}$ of $\leq_\theta$ is antisymmetric and hence a partial order.
    When $\theta$ is a compatible congruence, we will henceforth equip the set $P/\theta$ with the order $\overrightarrow{\leq_\theta}$.
    Moreover, when ${(P,\leq,\la,\mu)}$ is a circuit, we denote by $P/\theta$ the circuit $(P/\theta,\overrightarrow{\leq_\theta},\pi_\theta\circ\la,\pi_\theta\circ\mu)$, where ${\pi_\theta:P\to P/\theta}$ is the quotient map.
\end{definition}

Which equivalences $\theta$ are compatible congruences?
First, call a subset $S$ of $P$ \defn{convex} if for all $s,t\in S$ and $p\in P$, $s\leq p\leq t \implies p\in S$.
If $\theta$ is a compatible congruence on $P$, then every equivalence class $[p]$ is necessarily convex~\cite[\S36]{Sturm72}; this is required to avoid failure of antisymmetry of $\leq_\theta$ (cf.\ \cref{ex:non-compatible-congruence}).
However, not every equivalence relation $\theta$ on $P$ whose equivalence classes are convex is a compatible congruence.
As an example, consider the order underlying the circuit $P$ from \cref{ex:nonlocal-comp} and the equivalence relation $\theta$ whose equivalence classes are $\{a,d\}$ and $\{b,c\}$.
%    The following is an example of this failure, taken from~\cite{Stan86}.
%    \begin{equation}
%        \tikzfig{stanley, or \cref{ex:nonlocal-comp}}
%    \end{equation}
%    The equivalence classes of $\theta$ are indicated by the grey dashed boxes.
Once one identifies $a$ and $d$, the equivalence class $\{b,c\}$ becomes non-convex, resulting in a relation $\leq_\theta$ that is not antisymmetric.
However, as long as one performs a sequence of quotients that identify the elements of a \emph{single} convex subset at a time, the resulting relation will remain a partial order.
Every quotient by a compatible congruence in fact arises in this way.

\begin{definition}
    Let $S \subseteq P$ be a convex set.
    Denote by $\theta_S$ the equivalence relation on $P$ that has equivalence classes $S$ and $\{p\}$ for $p\in P\setminus S$.
    If $S$ contains precisely two elements, that is, $S = \{p,q\}$ where $p\lessdot q$ or $p \incomp q$, then we call $\theta_S$ an \defn{atomic compatible congruence}.
\end{definition}

\begin{proposition}
    \label{prop:decomposition-into-atomic-quotients}
    If $S\subseteq P$ is convex, then $\theta_S$ is a compatible congruence.
    Moreover, if $P$ is finite then an equivalence relation $\theta$ on $P$ is a compatible congruence if and only if the quotient map $\pi_\theta: P \to P/\theta$ is, up to order isomorphisms, a finite composition of quotients by atomic compatible congruences.
\end{proposition}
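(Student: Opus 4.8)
The statement splits into two claims, and I would treat them separately, with a single ``third isomorphism'' lemma on iterated quotients doing most of the work for the second one. Finiteness will be used only in the second claim, to make an induction terminate.

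For the first claim I would show directly that $\overrightarrow{\leq_{\theta_S}}$ is antisymmetric by arguing that $\leq_{\theta_S}$ admits no nontrivial directed cycle. Since $\theta_S$ has a single non-singleton class $S$, the restriction of $\leq_{\theta_S}$ to the singleton classes recovers the order $\leq$ on $P\setminus S$, whose strict part is acyclic; hence any cycle must pass through the class $S$. Isolating a single excursion out of $S$ and back yields singletons $x_1,\dots,x_n\in P\setminus S$ together with $s,t\in S$ satisfying $s\leq x_1\leq\dots\leq x_n\leq t$, and convexity of $S$ then forces $x_1\in S$, a contradiction. The same computation applied to $S=\{p,q\}$ with $p\lessdot q$ or $p\incomp q$ shows such two-element sets are convex, so atomic compatible congruences really are compatible congruences, justifying the terminology.

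The engine for the second claim is the following lemma, which I would prove first: if $\alpha$ is a compatible congruence on a poset $P$ and $\gamma$ is any equivalence on $P/\alpha$, and $\beta$ is the pullback equivalence on $P$ (so $x\mathrel\beta y$ iff $\pi_\alpha(x)\mathrel\gamma\pi_\alpha(y)$), then under the canonical bijection $(P/\alpha)/\gamma\cong P/\beta$ the transitive closures $\overrightarrow{(\leq_{P/\alpha})_\gamma}$ and $\overrightarrow{\leq_\beta}$ coincide. Both are transitive closures of generating relations induced by $\leq_P$; since $\leq_{P/\alpha}$ is itself the transitive closure of the image of $\leq_P$, adjoining its transitively-derived pairs as extra generators of $(\leq_{P/\alpha})_\gamma$ does not enlarge the closure, and the surviving generators match those of $\leq_\beta$ under the bijection. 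Consequently $\gamma$ is compatible on $P/\alpha$ iff $\beta$ is compatible on $P$, in which case the quotient posets are order-isomorphic and the composite quotient map equals $\pi_\beta$ up to that isomorphism. For the forward implication of the second claim I would then induct on the defect $d=|P|-|P/\theta|=\sum_C(|C|-1)$. When $d=0$, $\theta$ is trivial and $\pi_\theta$ is the empty composition; when $d>0$, I pick a non-singleton $\theta$-class $C$, which is convex since $\theta$ is compatible, and extract a convex pair of $P$ inside it: if $C$ has two incomparable elements these form a convex pair, and otherwise $C$ is a chain whose consecutive elements are a cover in $P$ by convexity. Merging this pair is an atomic compatible congruence $\alpha=\theta_{\{p,q\}}\subseteq\theta$; taking $\beta=\theta$ and $\gamma=\theta/\alpha$ in the lemma shows $\gamma$ is compatible on $P_1=P/\alpha$ with $P_1/\gamma\cong P/\theta$ and matching quotient maps, while the defect of $\gamma$ is $d-1$, so the inductive hypothesis together with precomposition by $\pi_\alpha$ finishes this direction.

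The converse is then immediate from the lemma: given a factorisation of $\pi_\theta$ (up to order isomorphism) as atomic quotients $P=P_0\to P_1\to\dots\to P_n$, each $P_i$ is a poset because atomic congruences are compatible by the first claim, and iterating the lemma identifies $P_n$ order-isomorphically with the set $P/\theta$ carrying $\overrightarrow{\leq_\theta}$; since $P_n$ is a poset, $\overrightarrow{\leq_\theta}$ is antisymmetric, i.e.\ $\theta$ is a compatible congruence. I expect the main obstacle to be the lemma itself --- in particular the careful verification that the transitive closure of the two-step quotient order equals, rather than properly refines, that of the one-step order, together with the bookkeeping concealed in the phrase ``up to order isomorphisms''. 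Once the lemma is in hand, the convex-pair extraction and the defect induction are routine, and finiteness enters only to guarantee that the induction terminates.
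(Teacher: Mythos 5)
Your proposal is correct and complete: the convexity argument for the first claim, the ``third isomorphism'' lemma identifying $\overrightarrow{(\leq_{P/\alpha})_\gamma}$ with $\overrightarrow{\leq_\beta}$, and the defect induction extracting a covering or incomparable pair from a convex class all check out. The paper itself omits the proof entirely (deferring to Sturm), so there is nothing to compare against in detail, but your argument fills that gap along exactly the lines the paper gestures at---realising each quotient by a compatible congruence as a finite sequence of pairwise ``gate compositions''---with the iterated-quotient lemma being the technical ingredient that the paper's informal remark conceals.
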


We omit the (straightforward) proof (see also~\cite[\S7]{Sturm73}), the main point being that each quotient by a compatible congruence can be regarded as the result of a `gate composition' procedure in a circuit diagram, with each atomic compatible congruence $\theta_{\{p,q\}}$ corresponding to a single sequential composition (if $p\lessdot q$ or $q\lessdot p$) or a single parallel composition (monoidal product; if $p \incomp q$).
The connection with circuit morphisms is due to the following.

\begin{definition}
    The \defn{kernel} of a map $f: P\to Q$, denoted $\ker f$, is the equivalence relation on $P$ whose equivalence classes are the fibres of $f$---that is, $p (\ker f) p' \iff f(p) = f(p')$ for $p,p'\in P$.
\end{definition}
\begin{proposition}[{\cite[\S2]{Sturm77},~\cite[Prop.~4.10]{Will24}}]
    An equivalence $\theta$ on $P$ is a compatible congruence if and only if it is the kernel of an order-preserving map $f: P \to Q$ for some partially ordered set $Q$.%
%        \footnote{This may, of course, be considered analogous to the group-theoretic fact that normal subgroups are precisely kernels of group homomorphisms, and precisely the subgroups that yield a well-defined group structure on the quotient set.}
\end{proposition}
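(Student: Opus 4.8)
The plan is to prove both implications by exhibiting the canonical factorisation of an order-preserving map through its quotient, and to observe that the kernel condition is precisely what forces the induced map to be injective, which is what transports antisymmetry back down.

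For the forward implication, suppose $\theta$ is a compatible congruence. Then by \cref{def:compatible-congruence} the relation $\overrightarrow{\leq_\theta}$ is a partial order on $P/\theta$, so I take $Q \coloneqq (P/\theta, \overrightarrow{\leq_\theta})$ and $f \coloneqq \pi_\theta$. Since $\leq_\theta$ is, by \cref{def:quotient-order}, the minimal relation making $\pi_\theta$ order-preserving, and $\leq_\theta\,\subseteq\,\overrightarrow{\leq_\theta}$, the map $\pi_\theta$ is order-preserving as a map into $Q$. Its fibres are exactly the $\theta$-classes, so $\ker \pi_\theta = \theta$, as required.

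For the reverse implication, suppose $\theta = \ker f$ for some order-preserving $f: P \to Q$ with $Q$ a poset. Since $f$ is constant on $\theta$-classes, it factors as $f = \bar f \circ \pi_\theta$ for a unique map $\bar f: P/\theta \to Q$ with $\bar f([p]) = f(p)$. I first check that $\bar f$ preserves $\leq_\theta$: if $[p] \leq_\theta [q]$ then $p' \leq q'$ for some $p'\in[p]$, $q'\in[q]$, whence $\bar f([p]) = f(p') \leq f(q') = \bar f([q])$ by order-preservation of $f$. Because $\leq_Q$ is transitive, $\bar f$ then also preserves the transitive closure $\overrightarrow{\leq_\theta}$, since any chain $[r_0]\leq_\theta\cdots\leq_\theta[r_n]$ maps to a $\leq_Q$-chain that collapses by transitivity.

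It remains to deduce antisymmetry of $\overrightarrow{\leq_\theta}$, which is the only condition needed for $\theta$ to be a compatible congruence. Suppose $[p] \overrightarrow{\leq_\theta} [q]$ and $[q] \overrightarrow{\leq_\theta} [p]$. Applying $\bar f$ gives $\bar f([p]) \leq_Q \bar f([q])$ and $\bar f([q]) \leq_Q \bar f([p])$, so $\bar f([p]) = \bar f([q])$ by antisymmetry of $Q$; that is, $f(p) = f(q)$. But $\theta = \ker f$ says exactly that $f(p) = f(q) \iff [p] = [q]$, so $[p] = [q]$, and $\overrightarrow{\leq_\theta}$ is antisymmetric. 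The crux of the argument—and the one place where the hypothesis $\theta = \ker f$ (rather than merely $\ker f \subseteq \theta$) is used—is this final step: the kernel condition makes the induced map $\bar f$ injective, and injectivity is what lets antisymmetry of $Q$ be pulled back to $P/\theta$. I expect this to be the only point requiring care, as the factorisation and the preservation of the transitive closure are routine.
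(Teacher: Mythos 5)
Your proof is correct; the paper itself omits the argument, deferring to the cited references, and your two-step argument (quotient map for the forward direction, the induced injective map $\bar f$ pulling antisymmetry back for the converse) is exactly the standard proof found there. Your identification of the injectivity of $\bar f$ as the crux, i.e.\ the one place where $\theta = \ker f$ rather than $\ker f \subseteq \theta$ is needed, is apt.
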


\iflongversion\else
The following is easy to verify.
\fi
\begin{proposition}
    \label{prop:morphism-decomposition-1}
    Let $f : P \to Q$ be a circuit morphism.
    Let $\pi_f : P \twoheadrightarrow P/\ker f$ be the quotient map and $[f] : P/\ker f \to Q$ the (well-defined) injective map $[p]\mapsto f(p)$.
    Then $\pi_f$ and $[f]$ are circuit morphisms and $f = [f] \circ \pi_f$:
    \begin{equation}
        \tikzfig{commutative-diagrams/morphism-decomposition-1}.
    \end{equation}
\end{proposition}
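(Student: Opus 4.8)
The plan is to verify the four assertions—that $\ker f$ is a compatible congruence (so that $P/\ker f$ is a genuine circuit), that $\pi_f$ and $[f]$ are circuit morphisms, and that $f=[f]\circ\pi_f$—largely in turn, relying on the quotient-circuit structure already set up. That $\ker f$ is a compatible congruence is immediate from the characterisation just recalled, since a circuit morphism is in particular order-preserving; hence by \cref{def:compatible-congruence} the set $P/\ker f$ carries the order $\overrightarrow{\leq_{\ker f}}$ and the circuit structure with input and output maps $\pi_f\circ\la_P$ and $\pi_f\circ\mu_P$. Well-definedness and injectivity of $[f]$, together with the factorisation $([f]\circ\pi_f)(p)=[f]([p])=f(p)$, then follow directly from the definition of the kernel, since $[p]=[p']$ holds precisely when $f(p)=f(p')$.

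For $\pi_f$ I would check the three conditions of \cref{def:morphism}. Order-preservation holds by construction: $\leq_{\ker f}$ is by \cref{def:quotient-order} the minimal relation making $\pi_f$ order-preserving, and $\overrightarrow{\leq_{\ker f}}$ contains it. The input and output conditions are in fact met with equality, because the input and output maps of $P/\ker f$ are by definition $\pi_f\circ\la_P$ and $\pi_f\circ\mu_P$, so that $\pi_f(\la_P(\lA))=\la_{P/\ker f}(\lA)$ and dually for outputs.

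The only step requiring a little care is the order-preservation of $[f]$, precisely because $\overrightarrow{\leq_{\ker f}}$ is a transitive closure rather than $\leq_{\ker f}$ itself. I would handle this by introducing the relation $R$ on $P/\ker f$ given by $[p]\mathrel{R}[p']$ iff $f(p)\leq f(p')$; this is well-defined (as $f$ is constant on $\ker f$-classes), transitive (by transitivity of $\leq$ on $Q$), and contains $\leq_{\ker f}$ (if $a\leq b$ with $a\in[p]$ and $b\in[p']$, then $f(p)=f(a)\leq f(b)=f(p')$). Since $\overrightarrow{\leq_{\ker f}}$ is the smallest transitive relation containing $\leq_{\ker f}$, it is contained in $R$, so whenever $[p]\mathrel{\overrightarrow{\leq_{\ker f}}}[p']$ we get $f(p)\leq f(p')$, i.e.\ $[f]$ is order-preserving. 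The remaining two conditions for $[f]$ are then immediate from the factorisation: since $\la_{P/\ker f}(\lA)=[\la_P(\lA)]$, we have $[f](\la_{P/\ker f}(\lA))=f(\la_P(\lA))\geq\la_Q(\lA)$ because $f$ respects inputs, and dually $[f](\mu_{P/\ker f}(\lB))=f(\mu_P(\lB))\leq\mu_Q(\lB)$ because $f$ respects outputs.

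I expect the transitive-closure point for $[f]$ to be the only genuine obstacle; everything else is bookkeeping, with the input/output conditions for both $\pi_f$ and $[f]$ reducing, via the factorisation and the definition of the quotient circuit, to conditions already assumed for $f$.
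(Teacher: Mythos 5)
Your proposal is correct and follows essentially the same route as the paper's proof: the key step in both is that the pullback of $\leq_Q$ along $[f]$ is a transitive relation containing $\leq_{\ker f}$, hence contains its transitive closure $\overrightarrow{\leq_{\ker f}}$, while the input/output conditions reduce to those for $f$ via $\la_{P/\ker f}=\pi_f\circ\la_P$ and $\mu_{P/\ker f}=\pi_f\circ\mu_P$. You are merely more explicit about the preliminary bookkeeping (that $\ker f$ is a compatible congruence, well-definedness and injectivity of $[f]$, and the properties of $\pi_f$), which the paper dismisses as the trivial part.
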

%    \tvdll{In other words, the category of circuits and circuit morphisms has quotients. (?)}
\iflongversion
\begin{proof}
    The only nontrivial statement to verify is that $[f] : P/\ker f \to Q$ is a circuit morphism.
    Recall that $P/\ker f$ is ordered by $\overrightarrow{\leq_{\ker f}}$ and has input and output maps $\la_{P/\ker f} = \pi_f \circ \la_P$ and $\mu_{P/\ker f} = \pi_f \circ \mu_P$, respectively.
%        denotes the circuit $(P/\ker f, \overrightarrow{\leq_{\ker f}}, \pi_f \circ \la_P, \pi_f\circ \mu_P)$.
    To show that $[f]$ is order-preserving, first suppose that $[p],[q]\in P/\ker f$ and $[p] \leq_{\ker f} [q]$.
    This means that there exist $p',q'\in P$ so that $[p]=[p']$, $p'\leq_P q'$, and $[q']=[q]$.
    By construction, then,
    \begin{equation*}
    [f]([p])
        = f(p) = f(p') \leq_Q f(q') = f(q) = [f]([q]).
    \end{equation*}
    Thus, $[p] \leq_{\ker f} [q] \implies [f]([p]) \leq_Q [f]([q])$.
    Since $\leq_Q$ is transitive, we also have $[p] \mathrel{\overrightarrow{\leq_{\ker f}}} [q] \implies [f]([p]) \leq_Q [f]([q])$, so $[f]$ is order-preserving.
    That it also respects inputs is immediate: $[f] \circ \la_{P/\ker f} = [f] \circ \pi_f \circ \la_P = f \circ \la_P \geq \la_Q$ pointwise for all $\lA\in\labelsA$, where the last inequality follows from the fact that $f$ is itself a circuit morphism.
    Similarly, $[f]$ respects outputs.
\end{proof}
\else

\fi

%    Can I use this diagram to define $P/\ker f$ via a universal construction? Or even $P/\theta$ for arbitrary $\theta$.
It remains to argue that in addition to the quotient map $\pi_f$, the injective circuit morphism $[f]$ also models a valid syntactical circuit rewrite.%
\footnote{Note that in disanalogy to the isomorphism theorems for algebraic structures, the map $[f] : P/\ker f \to \im(f) \subseteq Q$ is not necessarily a circuit isomorphism; it is injective but not always an order embedding.}

\begin{definition}
    \label{def:injective-morphism-types}
    The following are three types of injective circuit morphisms.
    Each corresponds to a certain elementary type of circuit rewrite and is exemplified below.
    \begin{enumerate}[(i)]
        \item\label{itm:adds-isolated-gates} Let $(P,\leq_P,\la,\mu)$ be a circuit from $\labelsA$ to $\labelsB$ and $S$ an arbitrary set. Write $P\sqcup S$ for the disjoint union of the sets $P$ and $S$.
        Let $\leq$ be the partial order on $P\sqcup S$ that satisfies $\leq \cap\, (P\times P) =\, \leq_P$ and $s\incomp q$ for any $s\in S$ and $q\in P\sqcup S$.
        Denote by $P\sqcup S$ the circuit from $\labelsA$ to $\labelsB$ given by $(P\sqcup S, \leq, \la,\mu)$.
        Say that a circuit morphism $f:P \to Q$ \defn{adds isolated gates} if $Q = P\sqcup S$ for some set $S$ and $f$ is the inclusion map $P \hookrightarrow P\sqcup S$.
        Each such map is in fact a circuit embedding.
        \item\label{itm:adds-wires} A circuit morphism that \defn{adds wires} is one of the form $\id_P: (P,{\leq},\la,\mu) \to (P, {\leq'}, \la,\mu)$, $p\mapsto p$, where $\leq,\leq'$ are such that $p\leq q \implies p \leq' q$.
        \item\label{itm:reguides} A circuit morphism that \defn{advances inputs and delays outputs} is one of the form $\id_P: (P,{\leq},\la,\mu) \to (P,{\leq},\la',\mu')$, $p\mapsto p$ where $\la(\lA) \geq \la'(\lA)$ for all $\lA\in\labelsA$ and $\mu(\lB) \leq \mu'(\lB)$ for all $\lB\in\labelsB$.
        This is always an embedding.\qedhere
    \end{enumerate}
\end{definition}

\begin{example}
    The first morphism below is a quotient map; the other three are of the types just introduced.
    \begin{equation*}
        \tikzfig{circuits/deco-1} \twoheadrightarrow \tikzfig{circuits/deco-2} \stackrel{\text{(\ref{itm:adds-isolated-gates})}}{\hookrightarrow} \tikzfig{circuits/deco-3} \ \stackrel{\text{(\ref{itm:adds-wires})}}{\rightarrow} \tikzfig{circuits/deco-4} \ \stackrel{\text{(\ref{itm:reguides})}}{\hookrightarrow} \tikzfig{circuits/deco-5}.
        \qedhere
    \end{equation*}
\end{example}

\begin{theorem}
    \label{thm:morphism-decomposition-2}
    A map $f: P \to Q$ is a circuit morphism if and only if it is, up to isomorphisms, the composition of a finite sequence of maps of the following four types.
    \begin{itemize}
        \item quotients by compatible congruences (which, in case $P$ is finite, are in turn always finite compositions of quotients by atomic compatible congruences, by \cref{prop:decomposition-into-atomic-quotients});
        \item morphisms that add isolated gates;
        \item morphisms that add wires;
        \item morphisms that advance inputs and delay outputs.
    \end{itemize}
\end{theorem}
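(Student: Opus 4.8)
The plan is to prove the two implications separately, with the forward (``only if'') direction carrying essentially all of the content; the converse is a routine closure check. For the converse, I would first record that circuit morphisms are closed under composition---order-preservation composes, and the defining inequalities \cref{def:morphism}(ii)--(iii) chain through a composite---and that pre- or post-composing with a circuit isomorphism again yields a circuit morphism, so the clause ``up to isomorphisms'' costs nothing. It then suffices to verify that each of the four listed types is itself a circuit morphism. For quotients by compatible congruences this is exactly the argument already made for $\pi_f$ in the proof of \cref{prop:morphism-decomposition-1} ($\pi_\theta$ is order-preserving by the construction of $\overrightarrow{\leq_\theta}$, and respects inputs and outputs because $\la_{P/\theta}=\pi_\theta\circ\la$ and $\mu_{P/\theta}=\pi_\theta\circ\mu$ hold with \emph{equality}); for the three injective types it is asserted, and immediate, in \cref{def:injective-morphism-types}.

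For the ``only if'' direction the first move is to peel off the quotient via \cref{prop:morphism-decomposition-1}: any circuit morphism factors as $f=[f]\circ\pi_f$, where $\pi_f$ is a quotient by the compatible congruence $\ker f$ (which, when $P$ is finite, decomposes further into atomic quotients by \cref{prop:decomposition-into-atomic-quotients}) and $[f]$ is an \emph{injective} circuit morphism. This reduces the theorem to the following claim: every injective circuit morphism $g\colon R\to Q$ is, up to a final isomorphism, a composite of one morphism of each of the three injective types. I would prove this by explicit construction. Put $S\coloneqq Q\setminus\im g$, the gates of $Q$ missed by $g$, and transport the data of $Q$ back along the set-bijection $\bar g\colon R\sqcup S\to Q$ that acts as $g$ on $R$ and as the identity on $S$; write $\leq^\ast,\la^\ast,\mu^\ast$ for the order and maps on $R\sqcup S$ pulled back from $Q$ along $\bar g$. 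By construction $\bar g\colon(R\sqcup S,\leq^\ast,\la^\ast,\mu^\ast)\to Q$ is an order isomorphism satisfying $\bar g\circ\la^\ast=\la_Q$ and $\bar g\circ\mu^\ast=\mu_Q$, hence a genuine \emph{circuit} isomorphism (this is precisely the equality condition flagged in the Remark after \cref{def:morphism}).

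The decomposition I would then exhibit is
\[
    R \hookrightarrow R\sqcup S \xrightarrow{\ \id\ } (R\sqcup S,\leq^\ast) \xrightarrow{\ \id\ } (R\sqcup S,\leq^\ast,\la^\ast,\mu^\ast) \xrightarrow[\ \sim\ ]{\ \bar g\ } Q,
\]
where the first arrow adds the isolated gates $S$ (\cref{def:injective-morphism-types}\itmref{itm:adds-isolated-gates}), the second replaces the coarse order of $R\sqcup S$ by $\leq^\ast$ (adds wires, \itmref{itm:adds-wires}), the third replaces the inherited maps $\la_R,\mu_R$ by $\la^\ast,\mu^\ast$ (advances inputs and delays outputs, \itmref{itm:reguides}), and the last is the isomorphism $\bar g$. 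Tracing a point $p\in R$ through the chain gives $p\mapsto g(p)$, so the composite is indeed $g$. The validity of the two middle arrows is a pair of direction-checks: adding wires is legitimate because $\leq_{R\sqcup S}\subseteq\,\leq^\ast$, since relations within $R$ are preserved ($g$ is order-preserving) and the elements of $S$ contribute only reflexive relations, which lie in any order; and advancing/delaying is legitimate because $g$ respects inputs and outputs, so $g(\la_R(\lA))\geq_Q\la_Q(\lA)$ transports under $\bar g$ to $\la_R(\lA)\geq^\ast\la^\ast(\lA)$, and dually $\mu_R(\lB)\leq^\ast\mu^\ast(\lB)$, exactly the inequalities required by \itmref{itm:reguides}.

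The only place where genuine care is needed---and the step I expect to be the main obstacle---is keeping these directions straight: one must confirm that the pulled-back order $\leq^\ast$ is a \emph{refinement} of the order on $R\sqcup S$ (so that wire-\emph{addition}, not removal, is what the middle arrow performs) and that the circuit-morphism inequalities of $g$ point in precisely the directions demanded by the ``advance/delay'' type. Everything else---the composition-closure of the converse, the equalities making $\bar g$ a circuit isomorphism, and the bijectivity of $\bar g$ from injectivity of $g$ together with $S=Q\setminus\im g$---is bookkeeping.
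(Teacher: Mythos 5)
Your proof is correct and follows essentially the same route as the paper's: peel off the quotient via \cref{prop:morphism-decomposition-1}, then decompose the injective remainder into an add-isolated-gates step, an add-wires step, an advance/delay step, and a final isomorphism. The only (cosmetic) difference is that you pull the order and input/output structure of $Q$ back onto $R\sqcup S$ along the bijection $\bar g$, whereas the paper pushes forward through the intermediate circuit $\tilde Q=(Q,\leq_Q,f\circ\la_P,f\circ\mu_P)$ and absorbs the relabelling into the ``up to isomorphism'' clause.
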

\iflongversion
\begin{proof}
    Let $f$ be a circuit morphism.
    By \cref{prop:morphism-decomposition-1} it suffices to show that $[f]: P/\ker f \to Q$ factorises into morphisms of the latter three types.
    Denote by $\tilde Q$ the circuit $(Q,\leq_Q, f\circ\la_P, f\circ\mu_P)$ and consider
    \begin{align*}
        \begin{split}
            \pi_f: P &\twoheadrightarrow P/\ker f \\
            p &\mapsto [p];
        \end{split} \\
        \begin{split}
            \id_{P/\ker f}: P/\ker f &\hookrightarrow P/\ker f \sqcup (Q\setminus\im(f)) \\
            \quad [p] &\mapsto [p];
        \end{split} \\
        \begin{split}
            g : P/\ker f \sqcup (Q\setminus\im(f)) &\to \tilde Q \\
            [p] &\mapsto [f]([p]) \quad\text{ for } p\in P; \\
            q &\mapsto q \quad\text{ for } q\in Q\setminus\im(f);
%                r &\mapsto
%                \begin{cases}
%                [f](r)
%                    & \text{if } r \in P/\ker f, \\
%                    r & \text{if } r \in Q\setminus\im(f)
%                \end{cases}
        \end{split} \\
        \begin{split}
            \id_Q : \tilde Q &\hookrightarrow Q \\
            q &\mapsto q.
        \end{split}
    \end{align*}
    Recall from \cref{def:injective-morphism-types} that we consider $Q\setminus\im(f)$ as discretely ordered.
    It is clear that all four maps are, up to isomorphism, morphisms of the required types and that $f$ factorises as
    \begin{equation*}
        \tikzfig{commutative-diagrams/morphism-decomposition-2}.
        \qedhere
    \end{equation*}
\end{proof}
\else
We provide the (straightforward) proof in a longer version of this article.
\fi

\begin{remark}
    In addition to syntactical circuit rewrites, the morphisms from \cref{def:morphism} can also be used to model implementations of circuits in a spacetime.
    Let $(M,g)$ be a time-oriented Lorentzian manifold without closed causal curves, so that the order $\preceq_g$ on $M$ induced by the metric $g$ and the time orientation is a partial order.
    Let $\labelsA$ and $\labelsB$ be finite sets of physical systems.
    Functions $\la_M : \labelsA \to M$ and $\mu_M : \labelsB \to M$ provide localisations of these systems at points in spacetime, and together with the spacetime constitute a tuple $(M,\preceq_g, \la_M,\mu_M)$ which in our terminology is just a special `circuit', in this case with infinite underlying set $M$.
%    For any finite circuit $(P,\leq,\la,\mu)$ with the same inputs and outputs $\labelsA,\labelsB$, the existence of a circuit morphism $f:P\to M$ is equivalent to saying, in the terminology of~\cite{losec, SalzS24}, that the circuit is \emph{embeddable} in the spacetime with respect to the localisation of inputs and outputs provided by $\la_M$ and $\mu_M$; in particular, embeddable in such a way that each gate corresponds to a point in spacetime and each wire to a causal (timelike or lightlike) curve.
    A morphism $f:P\to M$ from a finite circuit $P$ into $M$ provides a localisation of each of $P$'s gates at points in spacetime in such a way that appropriate causal curves are guaranteed to exist for each $P$'s wires (edges of its circuit diagram) and inputs and outputs are picked up and dropped off at the locations requested by $\la_M$ and $\mu_M$.
    See also Refs.~\cite{SalzS24,losec}.
\end{remark}
    \section{Connectivity}\label{sec:connectivity}
The following notion arises in quantum causality, in particular in the study of causal decompositions of unitary transformations~\cite{LB21,pic,VMA25b}.

\begin{definition}\label{def:connectivity}
    Let $P$ be a circuit from $\labelsA$ to $\labelsB$.
    The \defn{connectivity} of $P$ is the binary relation $G_P \subseteq \labelsA\times\labelsB$ satisfying
    \begin{equation}
        \forall \lA\in\labelsA, \lB\in\labelsB : \quad \lA \relG_P \lB \iff \la_P(\lA) \leq \mu_P(\lB).
    \end{equation}
    This defines a map $G_{(-)} : \CircuitPreOrder \to \powset(\labelsA\times\labelsB), P \mapsto G_P$.
    When the powerset $\powset(\labelsA\times\labelsB)$ is ordered under inclusion, this map is order-preserving.
\end{definition}

This last claim is simply to verify.
It tells us that, as should be expected, syntactical circuit rewrites never decrease connectivity.
As a consequence, if two circuits have distinct connectivity, this immediately indicates that they are syntactically inequivalent (see \cref{def:preorder}).
Also circuits with identical connectivity may however be inequivalent:

\begin{example}
    \label{ex:ccc}
    If $\cccG\subseteq\labelsA\times\labelsB$ is the relation between $\labelsA \coloneqq \{\lA_1,\lA_2,\lA_3\}$ and $\labelsB \coloneqq \{\lB_1,\lB_2,\lB_3\}$ given by
    \begin{equation*}
        \label{eq:ccc-relation}
        \cccG\ \coloneqq \tikzfig{ccc-relation},
    \end{equation*}
    then the circuits
    \begin{equation*}
        \label{eq:ccc-circuits}
        \cls{\cccG}\ \coloneqq \tikzfig{circuits/ccc-classical} \text{\quad and\quad} \latt{\cccG}\ \coloneqq \tikzfig{circuits/ccc-diamond}
    \end{equation*}
    both have connectivity $\cccG$.
    There exists a circuit morphism $f: \cls{\cccG} \to \latt{\cccG}$: it is the quotient by the equivalence relation whose equivalence classes are highlighted below.%
    \footnote{If you thought `Why does \latt{\cccG} have no wire directly from its bottom gate to its top gate?', please refer back to \cref{fn:transitive-reduction}.}
    \begin{equation*}
        \tikzfigpad{circuits/ccc-classical-merge} \stackrel{f}{\longrightarrow} \tikzfigpad{circuits/ccc-diamond}.
    \end{equation*}
    There is no circuit morphism in the other direction, however; therefore $\cls{\cccG} \prec \latt{\cccG}$.
    This follows by an application of \cref{prop:morphisms-respect-plusminus} similar to that in \cref{ex:nonlocal-comp}.
    Indeed, there exist quantum channels that may be expressed by circuits of the form $\latt{\cccG}$ but not by ones of the form $\cls{\cccG}$ (cf.\ \cref{fn:CNOT}).
\end{example}

In \cref{subsec:concept-circuit}, we construct the analogue of $\latt{\cccG}$ for general relations $G\subseteq\labelsA\times\labelsB$ and show that it in fact admits incoming morphisms from \emph{all} other circuits with connectivity $G$.
We generalise $\cls{\cccG}$ in \cref{subsec:primitive-circuit}, constructing a circuit $\cls{G}$ which satisfies the dual property, admitting outgoing morphisms into any other circuit with connectivity $G$.
%    \begin{proposition}[\tvdl{Bonus - probably remove}]
%        Let $P$ have connectivity $G$ and $Q$ be have connectivity $H$, where $G,H\subseteq\cA\times\cB$.
%        If there exist a circuit morphism $f:P\to Q$, then $G\subseteq H$.
%    \end{proposition}

\subsection{The concept lattice}\label{subsec:concept-circuit}
Let $\labelsA$ and $\labelsB$ be arbitrary sets.
From each binary relation $G\subseteq\labelsA\times\labelsB$ one may construct a complete lattice via a construction first realised by \textcite{Birk67} and extensively studied in formal concept analysis~\cite{Wille82,GW24}.
We use the terminology from~\cite{Wille82,GW24} and call this the \emph{concept lattice}.
We will briefly recall this construction and then show, in \cref{thm:main-thm}, that with the natural choice of input and output maps, it constitutes the smallest circuit with connectivity $G$ that admits incoming morphisms from every other circuit with the same connectivity.

We will use the notation
\begin{equation}
    \label{eq:parental-and-filial-sets}
    \ChG(\lA) \coloneqq \{\lB\in\labelsB \mid \lA\relG \lB\}  \text{\quad for } \lA\in\labelsA \text{\quad and\quad}
    \PaG(\lB) \coloneqq \{\lA\in\labelsA \mid \lA\relG \lB\}  \text{\quad for } \lB\in\labelsB.
\end{equation}
Denote by $\powset(\labelsA)$ and $\powset(\labelsB)$ the powersets of $\labelsA$ and $\labelsB$ and consider the maps%
\footnote{Our choice of letters $\gC$ and $\gP$ reflects the origins of this work in quantum causal modelling, where $G$ encodes \emph{causal structure} and the elements of $\ChG(\lA) = \gC(\{\lA\})$ are referred to as the `(causal) children' of $\lA\in\labelsA$, while those of $\PaG(\lB) = \gP(\{\lB\})$ are the `(causal) parents' of $\lB\in\labelsB$.}
\begin{align}
    \gC &: \powset(\labelsA) \to \powset(\labelsB) \dblcolon \a \mapsto \bigcap_{\lA\in\a} \ChG(\lA) = \{\lB \in \labelsB \mid \forall \lA \in\a: \lA \relG \lB \} \\
    \text{and\quad} \gP &: \powset(\labelsB) \to \powset(\labelsA) \dblcolon \b \mapsto \bigcap_{\lB\in\b} \PaG(\lB) = \{\lA \in \labelsA \mid \forall \lB \in\b: \lA \relG \lB \}. \label{eq:galois-connection-p}
\end{align}
In particular, $\gC(\emptyset) = \labelsB$ and $\gP(\emptyset) = \labelsA$.
With $\powset(\labelsA)$ and $\powset(\labelsB)$ ordered under inclusion, these maps are order-reversing:
\begin{equation}
    \label{eq:p-c-order-reversing}
    \begin{split}
        \forall \a,\a'\subseteq\labelsA&: \a\subseteq\a'\implies \gC(\a)\supseteq \gC(\a') \\
        \text{ and \quad }\forall \b,\b'\subseteq\labelsB&: \b\subseteq\b'\implies \gP(\b)\supseteq \gP(\b').
    \end{split}
\end{equation}
They also satisfy
\begin{equation}
    \label{eq:extensivity}
    \forall \a\subseteq\labelsA: \a \subseteq \gP\gC(\a) \qquad\text{and}\qquad \forall\b\subseteq\labelsB: \b\subseteq \gC\gP(\b).
\end{equation}
\cref{eq:p-c-order-reversing,eq:extensivity} make the pair $(\gC,\gP)$ into an (antitone) \emph{Galois connection} between the partially ordered sets $\powset(\labelsA)$ and $\powset(\labelsB)$~\cite{Ore44}.%
\footnote{In categorical terms, these maps form a dual adjunction between the orders $\powset(\labelsA)$ and $\powset(\labelsB)$ seen as categories.}
A direct consequence is that
\begin{equation}
    \label{eq:galois-connection}
    \gP\gC\gP = \gP \qquad\text{and}\qquad \gC\gP\gC = \gC.
\end{equation}
Moreover, the map $\gP\gC: \powset(\labelsA)\to\powset(\labelsA)$ acts as a \emph{closure operator} on $\powset(\labelsA)$, meaning it is
\begin{enumerate}[(i)]
    \item \label{itm:closure-operator-extensive} extensive: $\forall \a\subseteq\labelsA: \a \subseteq \gP\gC(\a)$;
    \item \label{itm:closure-operator-idempotent} idempotent: $\gP\gC\gP\gC = \gP\gC$; and
    \item \label{itm:closure-operator-order-preserving} order-preserving: $\forall\a,\a'\subseteq\labelsA:\a\subseteq\a' \implies \gP\gC(\a) \subseteq \gP\gC(\a')$.
\end{enumerate}
We therefore say that $\gP\gC(\a)$ is the \defn{closure} of $\a\subseteq\labelsA$ and that $\a$ is \defn{closed} if $\gP\gC(\a) = \a$.
It follows from (\ref{itm:closure-operator-extensive})--(\ref{itm:closure-operator-order-preserving}) that the closure of a set is precisely the smallest closed set containing it and that the intersection of (arbitrarily many) closed sets is again closed.
Moreover, by virtue of \cref{eq:galois-connection}, the closed subsets of $\labelsA$ are precisely the sets $\a$ that lie in the image of $\gP$.

\begin{example}
    With respect to the relation $\cccG$ defined in \cref{ex:ccc}, the closure of $\{\lA_1\}\subseteq \labelsA$ is $\{\lA_1,\lA_2\}$, while $\{\lA_2\}$ is itself closed.
\end{example}

Let $\lattA \subseteq\powset(\labelsA)$ be the set of closed subsets of $\labelsA$.
As $\lattA$ is closed under arbitrary intersections, it forms a complete lattice when ordered under inclusion: that is, any subset of elements $\{\a_i\}_{i\in I}\subseteq \lattA$ has a greatest lower bound and a least upper bound in $\lattA$.
They are given, respectively, by
\begin{equation*}
    \bigmeet_{i\in I} \a_i = \bigcap_{i\in I} \a_i  \qquad\text{and}\qquad  \bigjoin_{i\in I} \a_i = \gP\gC\left( \bigcup_{i\in I} \a_i \right).
\end{equation*}

Dually, the map $\gC\gP: \powset(\labelsB)\to\powset(\labelsB)$ satisfies conditions analogous to (\ref{itm:closure-operator-extensive})--(\ref{itm:closure-operator-order-preserving}) above and is therefore a closure operator on $\powset(\labelsB)$.
Let $\lattB$ be the set of closed subsets of $\labelsB$ with respect to this closure operator.
Just like $\lattA$, $(\lattB,\subseteq)$ forms a complete lattice; so does the opposite order $\lattBop \coloneqq (\lattB, \supseteq)$.
Note that $\lattA$ and $\lattB$ are the images of the operators $\gP$ and $\gC$, respectively; therefore, \cref{eq:galois-connection} implies that $\gC$ and $\gP$ restrict to a pair of inverse maps $\gC : \lattA \to \lattBop$ and $\gP : \lattBop \to \lattA$.
These are order isomorphisms, by \cref{eq:p-c-order-reversing}, and thus isomorphisms of complete lattices.
The concept lattice, defined as follows, takes a symmetric approach, incorporating both $\lattA$ and $\lattBop$.

\begin{definition}[\cite{Wille82,GW24}]
    Let $G\subseteq \labelsA\times\labelsB$.
    The \defn{concept lattice}%
    \footnote{In formal concept analysis, $\labelsA$ is a set of \emph{objects}, $\labelsB$ a set of \emph{properties}, and $G$ the relation specifiying which objects have which properties; together, this triple is referred to as a \emph{formal context}.
    An element $\<\a,\b\>\in\latt{G}$ is then called a \emph{concept}: it defines a set of objects ($\a$) that is uniquely characterised by the set of properties ($\b$) that they share (i.e.\ $\a=\gP(\b)$ and $\b=\gC(\a)$).}
    $\latt{G}$ is the set
    \begin{equation}
        \latt{G} \coloneqq \{\<\a,\b\> \in \lattA\times \lattB \mid \a = \gP(\b) \text{ and } \b = \gC(\a)\}
    \end{equation}
    ordered by
    \begin{equation}
        \<\a,\b\> \leq \<\a',\b'\> \quad:\!\iff\quad \a \subseteq \a' \text{ or, equivalently, } \b \supseteq \b'.
    \end{equation}
    For arbitrary $v\in\latt{G}$, we will denote by $\a_v\subseteq\labelsA$ and $\b_v\subseteq\labelsB$ the closed sets so that $v = \<\a_v,\b_v\>$.
    $\latt{G}$ forms a complete lattice whose greatest lower bounds and least upper bounds are given by
    \begin{align}
        \bigmeet_{i\in I} v_i = \left\< \bigcap_{i\in I} \a_{v_i}, \gC\gP\left(\bigcup_{i\in I} \b_{v_i}\right) \right\> \qquad\text{and}\qquad
        \bigjoin_{i\in I} v_i = \left\< \gP\gC\left(\bigcup_{i\in I} \a_{v_i}\right), \bigcap_{i\in I} \b_{v_i} \right\>.
    \end{align}
    Finally, define the input and output maps
    \label{eq:la-and-mu}
    \begin{align}
        \la_\latt{G}: \labelsA \to \latt{G}, \quad \lA &\mapsto \< \gP\gC(\{\lA\}), \gC(\{\lA\})\> \\
        \text{ and \quad }\mu_\latt{G}: \labelsB \to \latt{G}, \quad \lB &\mapsto \< \gP(\{\lB\}), \gC\gP(\{\lB\})\>.
    \end{align}
    That is, $\la_\latt{G}(\lA)$ is the smallest element $\<\a,\b\> \in \latt{G}$ such that $\lA\in\a$ and, dually, $\mu_\latt{G}(\lB)$ is the greatest element $\<\a,\b\>\in\latt{G}$ so that $\lB\in\b$.
    The tuple $(\latt{G},\leq,\la_\latt{G}, \mu_\latt{G})$ so defined constitutes a circuit, which we will also refer to as the concept lattice (with respect to the relation $G$).
%        \tvdl{Or: `concept circuit'; `lattice circuit'; `concept lattice circuit'; `normal (form) circuit'; `universal circuit'; `maximal circuit', `standard circuit'. To decide.}
\end{definition}

\begin{remark}
    $\lattA$, $\lattBop$, and $\latt{G}$ defined above provide three different, isomorphic perspectives on the {\canonicalcircuit}.
    Denote by $\pi_\labelsA: \latt{G} \to \lattA$ the map $v \mapsto \a_v$; this is an order isomorphism whose inverse is $\a \mapsto \<\a,\gC(\a)\>$.
    If we equip $\lattA$ with the input and output maps $\la_{\lattA} \coloneqq \pi_\labelsA \circ \la_\latt{G} = \gP\gC(\{-\})$ and $\mu_{\lattA} \coloneqq \pi_\labelsA \circ \mu_\latt{G} = \gC(\{-\})$, then $\pi_\labelsA$ becomes a circuit isomorphism.
%        If we equip $\lattA$ with the input and output maps $\la_{\lattA}(\lA) \coloneqq \gP\gC(\{\lA\})$ and $\mu_{\lattA}(\lB) \coloneqq \gP(\{\lB\})$, then $\pi_\labelsA$ becomes a circuit isomorphism.
    We can do the same to $\lattBop$ using the map $\pi_\labelsB : \latt{G}\to \lattBop$ defined by $v \mapsto \b_v$, which has inverse $\b\mapsto\<\gP(\b),\b\>$.
    The following diagram of circuit isomorphisms then commutes.
    \begin{equation*}
        \tikzfig{commutative-diagrams/three-concept-lattices}.\qedhere
    \end{equation*}
\end{remark}

\begin{example}
    The circuit $\latt{\cccG}$ in \cref{ex:ccc} is the {\canonicalcircuit} for the relation $\cccG$.
    As another example, consider the relation on $\labelsA = \{1,2,3,4\}$ and $\labelsB = \{x,y,z\}$ given by
    \begin{equation*}
        G\ = \tikzfig{lattice-example-relation}
    \end{equation*}
    The closed subsets of $\labelsA$ are $\labelsA$, all sets of the form $\PaG(\lB)$ for some $\lB\in\labelsB$, and their intersections: $\lattA = \{1234, 12, 23, 234, 2\}$.
    The lattice $\lattBop$ is constructed similarly; their Hasse diagrams are given by
    \begin{equation*}
        \lattA \quad=\tikzfigpad{lattice-example-LA} \quad\text{and}\qquad \lattBop \quad=\quad\tikzfig{lattice-example-LB}.
    \end{equation*}
    The resulting {\canonicalcircuit} is%
    \footnote{When interpreting $\latt{G}$ as a circuit in a symmetric monoidal category which is \emph{causal} or satisfies \emph{process terminality}, in the terminology of~\cite{PQP,KHC17}, then the top gate $\<1234,\emptyset\>$ in this example, which has no outputs, would necessarily correspond to a discarding operation. It may therefore be omitted without any reduction in expressivity of the circuit shape. However, we make no process terminality assumption in our current general setting.}
    \begin{equation*}
        \latt{G} \quad = \tikzfigpad{lattice-example-L}. \iflongversion\qedhere\fi
    \end{equation*}
\end{example}

\iflongversion
Note that in the above example, the inputs in the past of each gate $\<\a,\b\>$ are precisely the elements of $\a$, and the outputs in its future the elements of $\b$.
That is the case in general:

\begin{proposition}
    \label{prop:la-mu-properties}
    Let $v\in\latt{G}$.
    For any $\lA\in\labelsA,\lB\in\labelsB$, we have
    \begin{equation}
        \la_\latt{G}(\lA) \leq v \iff \lA\in\a_v \quad\text{and}\quad v \leq \mu_{\latt{G}}(\lB) \iff \lB\in\b_v.
    \end{equation}
    \iflongversion
    In other words, using the notation of \cref{eq:ppluspminus}, $v^- = \a_v$ and $v^+ = \b_v$.
    \fi
    Moreover,
    \begin{equation}
        \label{eq:join-and-meet-dense}
        v = \bigjoin_{\lA\in\a_v} \la_\latt{G}(\lA) = \bigmeet_{\lB\in\b_v} \mu_{\latt{G}}(\lB).
    \end{equation}
\end{proposition}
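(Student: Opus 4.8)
The plan is to prove the two biconditionals first, read off $v^- = \a_v$ and $v^+ = \b_v$ as an immediate consequence, and then establish the join- and meet-density identity \cref{eq:join-and-meet-dense} by direct substitution into the explicit formulas for joins and meets in $\latt{G}$. Throughout, the only ingredients needed are the closure-operator axioms \itmref{itm:closure-operator-extensive}--\itmref{itm:closure-operator-order-preserving} for $\gP\gC$ and their duals for $\gC\gP$, together with the convention that the order on $\latt{G}$ is inclusion on the $\labelsA$-component and reverse inclusion on the $\labelsB$-component.

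For the first equivalence, recall that $\la_\latt{G}(\lA) = \<\gP\gC(\{\lA\}), \gC(\{\lA\})\>$ and that $\<\a,\b\>\leq v$ iff $\a\subseteq\a_v$; hence $\la_\latt{G}(\lA)\leq v$ is equivalent to $\gP\gC(\{\lA\})\subseteq\a_v$. If $\lA\in\a_v$, then $\{\lA\}\subseteq\a_v$, so applying the order-preserving operator $\gP\gC$ and using that $\a_v$ is closed yields $\gP\gC(\{\lA\})\subseteq\gP\gC(\a_v)=\a_v$; conversely, if $\gP\gC(\{\lA\})\subseteq\a_v$, then extensivity gives $\lA\in\{\lA\}\subseteq\gP\gC(\{\lA\})\subseteq\a_v$. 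The second equivalence follows by the dual argument on the $\labelsB$-component: $v\leq\mu_\latt{G}(\lB)$ is equivalent to $\b_v\supseteq\gC\gP(\{\lB\})$, and the same two implications go through using extensivity and idempotence of the closure operator $\gC\gP$. Since \cref{eq:ppluspminus} defines $v^- = \{\lA : \la_\latt{G}(\lA)\leq v\}$ and $v^+ = \{\lB : v\leq\mu_\latt{G}(\lB)\}$, the two equivalences give $v^- = \a_v$ and $v^+ = \b_v$ at once.

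For \cref{eq:join-and-meet-dense}, I would substitute the elements $\la_\latt{G}(\lA)$, whose $\labelsA$-components are $\gP\gC(\{\lA\})$, into the join formula $\bigjoin_i v_i = \<\gP\gC(\bigcup_i \a_{v_i}), \bigcap_i \b_{v_i}\>$. The crucial computation is $\bigcup_{\lA\in\a_v}\gP\gC(\{\lA\}) = \a_v$: the inclusion $\supseteq$ is extensivity applied to each singleton, while $\subseteq$ holds because $\lA\in\a_v$ forces $\gP\gC(\{\lA\})\subseteq\gP\gC(\a_v)=\a_v$. Applying $\gP\gC$ and using closedness of $\a_v$ then shows the $\labelsA$-component of the join is $\gP\gC(\a_v)=\a_v$; since an element of $\latt{G}$ is determined by its $\labelsA$-component (the inverse of $\pi_\labelsA$ being $\a\mapsto\<\a,\gC(\a)\>$), this forces $\bigjoin_{\lA\in\a_v}\la_\latt{G}(\lA)=v$. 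The meet identity is obtained dually, substituting $\mu_\latt{G}(\lB)$ into $\bigmeet_i v_i = \<\bigcap_i \a_{v_i}, \gC\gP(\bigcup_i \b_{v_i})\>$ and computing $\bigcup_{\lB\in\b_v}\gC\gP(\{\lB\}) = \b_v$ on the $\labelsB$-side.

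No step presents a genuine obstacle; the calculations are routine applications of the closure axioms. The only points requiring mild care are the reversed order on the $\labelsB$-component—so that $v\leq\mu_\latt{G}(\lB)$ corresponds to a \emph{reverse} inclusion of closed sets—and the degenerate cases $\a_v=\emptyset$ or $\b_v=\emptyset$, in which the relevant join or meet ranges over the empty index set and returns, respectively, the bottom element $\<\gP\gC(\emptyset),\labelsB\>$ or the top element $\<\labelsA,\gC\gP(\emptyset)\>$ of $\latt{G}$; these are covered uniformly by the same formulas and still equal $v$.
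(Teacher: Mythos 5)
Your proposal is correct and follows essentially the same route as the paper: both reduce $\la_{\latt{G}}(\lA)\leq v$ to $\gP\gC(\{\lA\})\subseteq\a_v$ and settle it with extensivity, monotonicity/idempotence, and closedness of $\a_v$, then derive \cref{eq:join-and-meet-dense} from the explicit join/meet formulas in $\latt{G}$. The only cosmetic difference is that the paper phrases the inequality $v\geq\bigjoin_{\lA\in\a_v}\la_{\latt{G}}(\lA)$ as an upper-bound argument, whereas you compute $\bigcup_{\lA\in\a_v}\gP\gC(\{\lA\})=\a_v$ directly; the content is identical, and your explicit handling of the empty-index cases is a harmless addition.
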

\begin{proof}
    We have
    \begin{equation*}
        \la_\latt{G}(\lA) \leq v \iff \gP\gC(\{\lA\}) \subseteq \a_v \iff \{\lA\}\subseteq \a_v \iff \lA\in\a_v.
    \end{equation*}
    The second equivalence here follows from extensitivity~\eqref{eq:extensivity} (left to right) and idempotency (right to left) of the closure operator $\gP\gC$, as well as the fact that $\a_v$ is closed.

    As a consequence, $v$ is an upper bound to the set $\{\la_\latt{G}(\lA) \mid \lA\in\a_v\}$ and must therefore be at least as large as the least upper bound $\bigjoin_{\lA\in\a_v} \la_\latt{G}(\lA)$.
    The converse statement that $v \leq \bigjoin_{\lA\in\a_v} \la_\latt{G}(\lA)$ is equivalent to saying that $\a_v \subseteq \gP\gC \left(\bigcup_{\lA\in\a_v} \gP\gC(\{\lA\})\right)$, which is immediate from extensitivity of $\gP\gC$.
    The dual statements follow similarly.
\end{proof}
\fi

\begin{proposition}
    \label{prop:trivial-endomorphism-group}
%        The only endomorphism $f:\latt{G}\to\latt{G}$ of the {\canonicalcircuit} is the identity morphism.
%       $\latt{G}$ has a trivial circuit endomorphism monoid: $\End(\latt{G}) = \{\id_{\latt{G}}\}$.
    Each circuit endomorphism of $\latt{G}$ is trivial: $\End(\latt{G}) = \{\id_{\latt{G}}\}$.
\end{proposition}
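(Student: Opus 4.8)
The plan is to exploit the fact that in $\latt{G}$ each element is completely pinned down by the inputs lying in its past and the outputs lying in its future. Concretely, I would invoke \cref{prop:la-mu-properties}, which tells us that $v^- = \a_v$ and $v^+ = \b_v$ for every $v\in\latt{G}$. Since the pair $(\a_v,\b_v)$ (indeed either component on its own) determines $v$ uniquely, this says that the data $(v^-,v^+)$ already identifies each concept.

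Now let $f$ be an arbitrary circuit endomorphism of $\latt{G}$ and fix $v\in\latt{G}$. Applying \cref{prop:morphisms-respect-plusminus} with $P = Q = \latt{G}$ yields $v^-\subseteq f(v)^-$ and $v^+\subseteq f(v)^+$. Rewriting both inclusions through \cref{prop:la-mu-properties} gives $\a_v\subseteq\a_{f(v)}$ and $\b_v\subseteq\b_{f(v)}$.

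The final step is to read these two inclusions back as order relations using the definition of $\leq$ on $\latt{G}$. The inclusion $\a_v\subseteq\a_{f(v)}$ says exactly $v\leq f(v)$, while $\b_v\subseteq\b_{f(v)}$ — recalling that the second components are compared under \emph{reverse} inclusion — says exactly $f(v)\leq v$. Antisymmetry of the order on $\latt{G}$ then forces $f(v)=v$, and since $v$ was arbitrary we conclude $f=\id_{\latt{G}}$, whence $\End(\latt{G})=\{\id_{\latt{G}}\}$.

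There is no real obstacle here: the entire content is carried by \cref{prop:la-mu-properties,prop:morphisms-respect-plusminus}. The only point worth flagging is the mechanism that makes the argument close so tightly — respecting inputs pushes each gate \emph{weakly up} the lattice, whereas respecting outputs pushes it \emph{weakly down}, and because the past/future sets identify each concept exactly, these two opposing constraints admit only the fixed point. It is precisely this rigidity, special to the concept lattice, that collapses the endomorphism monoid to the identity.
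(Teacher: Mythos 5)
Your proof is correct, and it establishes exactly the two inequalities $v \leq f(v)$ and $f(v) \leq v$ that the paper's proof does, though by a slightly different route. The paper expands $v$ as the join $\bigjoin_{\lA\in\a_v}\la_{\latt{G}}(\lA)$ (dually, as a meet of outputs) via \cref{eq:join-and-meet-dense} and then invokes the general inequality $f\left(\bigjoin S\right)\geq\bigjoin f(S)$ for order-preserving maps together with input/output-respect. You instead combine \cref{prop:morphisms-respect-plusminus} with the identities $v^-=\a_v$ and $v^+=\b_v$ from \cref{prop:la-mu-properties}, and read the resulting inclusions $\a_v\subseteq\a_{f(v)}$ and $\b_v\subseteq\b_{f(v)}$ back as $v\leq f(v)$ and $f(v)\leq v$ directly from the definition of the order on $\latt{G}$ (correctly accounting for the reversal on the second component). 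What your route buys is economy: it reuses a proposition already proved for arbitrary circuits, and it needs no completeness of the lattice and no join/meet calculus --- only the fact that the order on $\latt{G}$ is literally inclusion of the $\a$-components, equivalently reverse inclusion of the $\b$-components. The underlying mechanism is the same in both arguments: respecting inputs pushes each gate weakly up, respecting outputs pushes it weakly down, and because concepts are determined by their past inputs and future outputs the only fixed point of these opposing constraints is the identity.
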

\begin{proof}
    \iflongversion\else
    First note that as a result of the construction of $\la_\lattG$ and $\mu_\lattG$, each $v\in\lattG$ can be expressed as~\cite[p.~27]{GW24}
    \begin{equation}
        v = \bigjoin_{\lA\in\a_v} \la_\latt{G}(\lA) = \bigmeet_{\lB\in\b_v} \mu_{\latt{G}}(\lB).
    \end{equation}
    \fi
    Let $f: \latt{G}\to\latt{G}$ be a circuit morphism.
    Since $f$ is order-preserving, we have, for all $S\subseteq \latt{G}$,
    \begin{equation*}
        f\left(\bigjoin S\right) \geq \bigjoin f(S)  \text{\qquad and \qquad} f\left(\bigmeet S\right) \leq \bigmeet f(S).
    \end{equation*}
    Moreover, since $f$ is a circuit morphism, it satisfies $f(\la_\latt{G}(\lA)) \geq \la_\latt{G} (\lA)$ and $f(\mu_\latt{G}(\lB)) \leq \mu_\latt{G}(\lB)$ for all $\lA\in\labelsA$, $\lB\in\labelsB$.
    Combining these facts \iflongversion with \cref{prop:la-mu-properties} \fi gives that for $v\in \latt{G}$,
    \begin{equation*}
        f(v) = f\left( \bigjoin_{\lA\in\a_v} \la_\latt{G}(\lA) \right) \geq \bigjoin_{\lA\in\a_v} f\left(\la_{\latt{G}}(\lA)\right) \geq \bigjoin_{\lA\in\a_v} \la_{\latt{G}}(\lA) = v
    \end{equation*}
    and,
    \iflongversion dually,
    \begin{equation*}
        f(v) = f\left( \bigmeet_{\lB\in\b_v} \mu_\latt{G}(\lB) \right) \leq \bigmeet_{\lB\in\b_v} f\left(\mu_{\latt{G}}(\lB)\right) \leq \bigmeet_{\lB\in\b_v} \mu_{\latt{G}}(\lB) = v.
    \end{equation*}
    \else
    by a dual argument, $f(v) \leq v$.
    \fi
    Therefore $f(v) = v$.
\end{proof}

\iflongversion
\begin{proposition}
    \label{prop:any-lattice-admits-incoming-morphisms}
    Let $L$ be a circuit from $\labelsA$ to $\labelsB$ which has connectivity $G$ and whose underlying partial order $(L,\leq_L)$ is a complete lattice.
    Then $L$ admits incoming morphisms $f:P \to L$ from any other circuit $P$ with connectivity $G_P \subseteq G$.
\end{proposition}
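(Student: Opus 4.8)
The plan is to define the morphism $f$ explicitly using the complete-lattice structure of $L$, mirroring the construction appearing in the proof of \cref{prop:trivial-endomorphism-group}. Concretely, for each $p\in P$ I would set
\[
    f(p) \coloneqq \bigjoin_{\lA\in p^-} \la_L(\lA),
\]
where $p^-\subseteq\labelsA$ denotes the set of inputs lying in $p$'s past, as in \cref{eq:ppluspminus}, and the join is taken in $L$. This is well-defined precisely because $(L,\leq_L)$ is a complete lattice, the empty join yielding the bottom element of $L$ in the case $p^-=\emptyset$.

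It then remains to verify the three defining conditions of \cref{def:morphism}. Order-preservation follows from the monotonicity of the assignment $p\mapsto p^-$: if $p\leq_P p'$ then $p^-\subseteq {p'}^-$, so that $f(p)\leq_L f(p')$, since enlarging the set of joinands can only increase the join. That $f$ respects inputs is immediate from reflexivity, since $\lA\in\la_P(\lA)^-$ for every $\lA\in\labelsA$; hence $\la_L(\lA)$ occurs among the joinands defining $f(\la_P(\lA))$, giving $f(\la_P(\lA))\geq_L\la_L(\lA)$.

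The only step that genuinely invokes the hypotheses---and the one I expect to form the crux---is the output condition. Fix $\lB\in\labelsB$. By the definition of the connectivity $G_P$ (\cref{def:connectivity}), the past of $\mu_P(\lB)$ is exactly $\mu_P(\lB)^- = \{\lA\in\labelsA : \lA\relG_P\lB\}$. For each such $\lA$ we have $\lA\relG_P\lB$, whence $\lA\relG\lB$ by the assumption $G_P\subseteq G$, whence $\la_L(\lA)\leq_L\mu_L(\lB)$ by the connectivity of $L$. Thus $\mu_L(\lB)$ is an upper bound for every joinand of $f(\mu_P(\lB))$ and therefore dominates their least upper bound: $f(\mu_P(\lB))\leq_L\mu_L(\lB)$, as required.

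Finally, I note that the dual construction $f(p)\coloneqq\bigmeet_{\lB\in p^+}\mu_L(\lB)$ serves equally well, with the roles of the input and output conditions interchanged and the nontrivial verification again being the one relying on $G_P\subseteq G$. The essential obstacle is thus confined to the output condition: it is what dictates the particular shape of the definition and is precisely where both the completeness of $L$ and the connectivity inclusion $G_P\subseteq G$ are brought to bear.
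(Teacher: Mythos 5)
Your proposal is correct and coincides with the paper's own proof: the same map $f(p)=\bigjoin_{\lA\in p^-}\la_L(\lA)$, the same three verifications, and the same remark about the dual meet-based construction. Nothing to add.
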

\begin{proof}
    Let $P$ be a circuit with connectivity $G_P \subseteq G$ and consider the map
    \begin{equation}
        \label{eq:map-into-general-lattice}
        f: P \to L, \quad p \mapsto \bigjoin_{\lA: \la_P(\lA) \leq p} \la_L(\lA) = \bigjoin_{\lA \in p^-} \la_L(\lA),
    \end{equation}
    where $\bigjoin$ is the join in $L$ and $p^-$ is as in \cref{eq:ppluspminus}.
    This map is order-preserving:
    \begin{equation*}
        p \leq_P q \implies p^- \subseteq q^- \implies \bigjoin_{\lA \in p^-} \la_L(\lA) \leq \bigjoin_{\lA \in q^-} \la_L(\lA).
    \end{equation*}
    It also respects inputs: for all $\lA\in\labelsA$,
    \begin{equation*}
%            \label{eq:f-respects-inputs}
        f(\la_P(\lA)) = \bigjoin_{\lA': \la_P(\lA') \leq \la_P(\lA)} \la_L(\lA') \geq \la_L(\lA).
    \end{equation*}
    Finally, $f$ respects outputs: since $G_P\subseteq G_L$, we have $\la_P(\lA) \leq \mu_P(\lB) \implies \la_L(\lA) \leq \mu_L(\lB)$, so that for all $\lB\in\labelsB$,
    \begin{equation*}
%            \label{eq:f-respects-outputs}
        f(\mu_P(\lB)) = \bigjoin_{\lA:\, \la_P(\lA) \leq \mu_P(\lB)} \la_L(\lA) \leq \bigjoin_{\lA:\, \la_L(\lA) \leq \mu_L(\lB)} \la_L(\lA) \leq \mu_L(\lB).
    \end{equation*}
%        where for the final equality we use \cref{eq:join-and-meet-dense} and the fact that $\a_{\mu_\latt{G}(\lB)} = \gP(\{\lB\}) = \PaG(\lB)$.
%        (Moreover, the construction of $\la_\latt{G}$ and $\mu_\latt{G}$ implies that the inequalities in \cref{eq:f-respects-inputs,eq:f-respects-outputs} are in fact equalities.)
    We conclude that $f:P\to L$ is a circuit morphism.
    A dual construction, which instead sets $f(p) \coloneqq \bigmeet_{\lB: \mu_P(\lB)\geq p} \mu_L(\lB)$, yields a generally distinct circuit morphism.
\end{proof}
\fi

\iflongversion
The previous three propositions allow us to prove our main result.
\else
This allows us to prove our main result.
\fi

\begin{theorem}
    \label{thm:main-thm}
    Let $\labelsA$ and $\labelsB$ be sets and $G\subseteq \labelsA\times\labelsB$ a relation.
    The concept lattice $\latt{G}$ is, up to circuit isomorphisms, the unique circuit satisfying the following three properties.
    \begin{enumerate}[(i)]
        \item\label{itm:main-thm-connectivity} $\latt{G}$ has connectivity $G_{\latt{G}} = G$.
        \item\label{itm:main-thm-rewrite} For any circuit $P$ with connectivity $G_P\subseteq G$, there exists a morphism $f: P \to \latt{G}$.\footnote{Changing the inclusion $G_P\subseteq G$ in \itmref{itm:main-thm-rewrite} to an equality would not change the validity of the theorem and while it might seem to make the uniqueness statement stronger, it is not difficult to see that this is in fact not the case.}
        \item\label{itm:main-thm-smallest} If $K$ is any other circuit satisfying (\ref{itm:main-thm-connectivity}) and (\ref{itm:main-thm-rewrite}) then there is an injective morphism $\latt{G} \to K$.
    \end{enumerate}
    In fact, for any circuit $Q$ with connectivity $G$ and morphism $g:\latt{G} \to Q$, $g$ is an embedding.
\end{theorem}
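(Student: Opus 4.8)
The plan is to reduce the claim to two facts already in hand: that $\latt{G}$ has trivial endomorphism monoid (\cref{prop:trivial-endomorphism-group}), and that $\latt{G}$, being a complete lattice with connectivity $G$, receives morphisms from every circuit whose connectivity is contained in $G$ (\cref{prop:any-lattice-admits-incoming-morphisms}). Given a circuit $Q$ with connectivity $G$ and a morphism $g : \latt{G} \to Q$, I would first invoke \cref{prop:any-lattice-admits-incoming-morphisms} with $L = \latt{G}$ and $P = Q$---whose hypotheses hold since $\latt{G}$ is a complete lattice of connectivity $G$ and $G_Q = G \subseteq G$---to obtain a morphism $h : Q \to \latt{G}$ in the reverse direction.

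The composite $h \circ g : \latt{G} \to \latt{G}$ is then a circuit endomorphism (a composition of circuit morphisms is again one, as is routine to check), so \cref{prop:trivial-endomorphism-group} forces $h \circ g = \id_{\latt{G}}$. In particular $h$ is a left inverse of $g$, whence $g$ is injective.

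It remains to verify the order-embedding condition from \cref{def:morphism}, namely $g(v) \leq g(w) \implies v \leq w$. This is now formal: applying the order-preserving map $h$ to $g(v) \leq_Q g(w)$ yields $v = h(g(v)) \leq_{\latt{G}} h(g(w)) = w$ by the identity $h \circ g = \id_{\latt{G}}$. Thus $g$ is a morphism that is simultaneously an order embedding, i.e.\ an embedding, as required.

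All of the substance sits in the existence of the reverse morphism $h$; once it is available, the triviality of $\End(\latt{G})$ does the rest almost immediately. I therefore do not expect a genuine obstacle, only the bookkeeping of checking that \cref{prop:any-lattice-admits-incoming-morphisms} applies---that is, confirming that the underlying order of $\latt{G}$ is a complete lattice and that its connectivity is exactly $G$, the latter following from \cref{prop:la-mu-properties}.
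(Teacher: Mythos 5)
Your argument for the final claim is exactly the paper's: obtain a reverse morphism $h:Q\to\latt{G}$ from \cref{prop:any-lattice-admits-incoming-morphisms}, conclude $h\circ g=\id_{\latt{G}}$ from \cref{prop:trivial-endomorphism-group}, and read off injectivity and the order-embedding property from the order-preserving left inverse. Your handling of \itmref{itm:main-thm-connectivity} via \cref{prop:la-mu-properties} and of \itmref{itm:main-thm-rewrite} via \cref{prop:any-lattice-admits-incoming-morphisms} also matches the paper, and \itmref{itm:main-thm-smallest} does follow from the final claim together with \itmref{itm:main-thm-rewrite} applied to $K$.

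The one genuine omission is the uniqueness clause: the theorem asserts that $\latt{G}$ is, up to circuit isomorphism, the \emph{unique} circuit satisfying (\ref{itm:main-thm-connectivity})--(\ref{itm:main-thm-smallest}), and your proposal never addresses this. It needs a short extra argument, which the paper supplies: if $K$ also satisfies all three properties, then there are injective morphisms $f:K\to\latt{G}$ (from \itmref{itm:main-thm-smallest} for $K$) and $g:\latt{G}\to K$ (from \itmref{itm:main-thm-smallest} for $\latt{G}$); the composite $f\circ g$ is an endomorphism of $\latt{G}$, hence equals $\id_{\latt{G}}$ by \cref{prop:trivial-endomorphism-group}, so $f$ is surjective as well as injective, and its (two-sided) inverse is the circuit morphism $g$, making $f$ a circuit isomorphism. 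This is in the same spirit as the steps you did carry out, but it is a separate claim requiring its own two or three lines, so you should add it.
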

\begin{proof}
    The fact that $\latt{G}$ has connectivity $G$ is a restatement of a basic result in formal concept analysis~\cite[Thm.~3]{GW24} that follows
    directly from \cref{prop:la-mu-properties}:
    \begin{equation*}
        \la_\latt{G}(\lA) \leq \mu_\latt{G}(\lB) \iff \lA\in\a_{\mu_\latt{G}(\lB)} \iff \lA\in \gP(\{\lB\}) \iff \lA\relG\lB.
    \end{equation*}
    That \itmref{itm:main-thm-rewrite} is satisfied by $\latt{G}$ is a direct consequence of \cref{prop:any-lattice-admits-incoming-morphisms} above and the fact that $\latt{G}$ is a complete lattice.
%        (The map $f:P\to\latt{G}$ constructed in \cref{eq:map-into-general-lattice}, which in this case is given by $f(p) = \left\< \gP\gC(p^-), \gC(p^-) \right\>\in\latt{G}$, can in fact can be shown to satisfy the strict equalities $f\circ\la_P = \la_{\latt{G}}$ and $f\circ\mu_P = \mu_{\latt{G}}$.)

    We now prove the final claim in the theorem statement, which implies (\ref{itm:main-thm-smallest}).
    Let $Q$ be a circuit with connectivity $G$ and $g:\latt{G} \to Q$ be a morphism.
    By \itmref{itm:main-thm-rewrite}, there also exists a morphism $f:Q\to\latt{G}$.
    The composite $f\circ g$ is an endomorphism of $\latt{G}$ and is therefore, by \cref{prop:trivial-endomorphism-group}, the identity morphism $\id_{\latt{G}}$.
    Thus, $g$ has a left-inverse $f$ which is, in particular, order-preserving; this means that $g$ must be an order-embedding and thus a circuit embedding.

    It remains to show that $\latt{G}$ is the unique circuit, up to isomorphism, satisfying (\ref{itm:main-thm-connectivity})--(\ref{itm:main-thm-smallest}).
    Suppose that $K$ also satisfies these properties; then there exist morphisms $f: K \to \latt{G}$ and $g: \latt{G} \to K$ such that $f$ and $g$ are both injective.
    The composite $f \circ g : \latt{G} \to \latt{G}$ is an endomorphism of $\latt{G}$ and therefore, again, equal to the identity $\id_{\latt{G}}$.
    Therefore $f$ must be surjective; since it is also injective, it is a bijective circuit morphism whose inverse $g$ is also a circuit morphism.
    In other words, $f$ is a circuit isomorphism.
%        In other words: $f$ is a monomorphism as well as a strict epimorphism, which always implies that it's an isomorphism.
\end{proof}

In terms of the preorder from \cref{def:preorder}---i.e.\ when forgetting the distinction between parallel morphisms---\cref{thm:main-thm} leads to the following.
\begin{corollary}\label{cor:concept-lattice}
    For any circuit $P$ and relation $G\subseteq \labelsA\times\labelsB$, we have
    \begin{equation}
        G_P \subseteq G \iff P \preceq \latt{G}.
    \end{equation}
    That is, the map $\latt{(-)} : \powset(\labelsA\times\labelsB) \to \CircuitPreOrder$ is order-preserving and forms an upper Galois adjoint to the map $G_{(-)} : \CircuitPreOrder \to \powset(\labelsA\times\labelsB)$.
\end{corollary}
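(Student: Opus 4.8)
The plan is to read off the biconditional directly from \cref{thm:main-thm} together with the order-preservation of $G_{(-)}$ recorded in \cref{def:connectivity}, and then to observe that this biconditional is precisely the defining condition of a (monotone) Galois connection, with $\latt{(-)}$ as the upper adjoint.

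First I would unpack the two sides using \cref{def:preorder}: the statement $P \preceq \latt{G}$ means exactly that there exists a circuit morphism $P \to \latt{G}$. For the forward implication, suppose $G_P \subseteq G$. Then property \itmref{itm:main-thm-rewrite} of \cref{thm:main-thm} furnishes a morphism $f : P \to \latt{G}$, whence $P \preceq \latt{G}$ immediately.

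For the reverse implication, suppose $P \preceq \latt{G}$, i.e.\ a morphism $P \to \latt{G}$ exists. Since $G_{(-)}$ is order-preserving (\cref{def:connectivity}), this yields $G_P \subseteq G_{\latt{G}}$; and by property \itmref{itm:main-thm-connectivity} we have $G_{\latt{G}} = G$, so $G_P \subseteq G$. This establishes the biconditional.

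Finally I would note that the biconditional $G_P \subseteq G \iff P \preceq \latt{G}$ is literally the adjunction condition identifying $\latt{(-)} : \powset(\labelsA\times\labelsB) \to \CircuitPreOrder$ as the upper adjoint to the lower adjoint $G_{(-)} : \CircuitPreOrder \to \powset(\labelsA\times\labelsB)$. Order-preservation of $\latt{(-)}$ then follows for free, as for any adjoint, but can also be seen directly: if $G \subseteq G'$, then $\latt{G}$ has connectivity $G \subseteq G'$, so applying the forward direction to $P = \latt{G}$ gives $\latt{G} \preceq \latt{G'}$. There is no substantive obstacle here, since all the real work is already done in \cref{thm:main-thm}; the only care needed is to keep track of which map is the upper and which the lower adjoint, and to recall that Galois connections are defined between preorders (here the preorder $\CircuitPreOrder$) exactly as between posets.
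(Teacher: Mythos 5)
Your proposal is correct and is exactly the argument the paper intends: the corollary is stated as an immediate consequence of \cref{thm:main-thm} (forward direction from property \itmref{itm:main-thm-rewrite}, reverse direction from property \itmref{itm:main-thm-connectivity} together with the order-preservation of $G_{(-)}$ noted in \cref{def:connectivity}), and the Galois-adjunction reading is just a restatement of the biconditional. Nothing is missing.
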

This determines the concept lattice up to syntactical equivalence but not up to circuit isomorphism; information about the circuit morphisms (regarding e.g.\ their injectivity) as in \cref{thm:main-thm} is required for the latter.%
\footnote{It is tempting to try to formulate \cref{thm:main-thm} itself as the characterisation of $\latt{G}$ as some right adjoint or appropriate colimit in the (non-preorder) category of circuits and circuit morphisms, just as \cref{cor:concept-lattice} does for the preorder $\CircuitPreOrder$; however, this seems not possible to do in a meaningful way, as the non-preorder category has too many parallel morphisms.}

\begin{remark}
    The Basic Theorem of Formal Concept Analysis~\cite{Wille82},~\cite[Thm.~3]{GW24} tells us that any complete lattice $(L,\leq)$ arises as the concept lattice of \emph{some} binary relation $G$.
    More precisely, a circuit ${(L,\leq,\la,\mu)}$ whose underlying order $(L,\leq)$ is a complete lattice is circuit-isomorphic to the {\canonicalcircuit} $L_G$ iff the connectivity of $L$ is $G$ and the images $\la(\labelsA), \mu(\labelsB) \subseteq L$ are join- and meet-dense in $L$, respectively.
\end{remark}

%\begin{remark}
%    Another natural circuit whose connectivity is $G\subseteq\labelsA\times\labelsB$ is the one with underlying partial order $\powset(\labelsA)$, input map $\la : \lA \mapsto \{\lA\}$ and output map $\mu: \lB \mapsto \PaG(\lB)$.
%    A dual construction yields a (generally non-isomorphic) circuit on the partial order $\powset(\labelsB)^\text{op}$.
%    Both these circuits have connectivity $G$ and have, like the {\canonicalcircuit} $\latt{G}$, incoming morphisms from every other circuit with the same connectivity\iflongversion{} (after all, powersets are complete lattices)\fi; they are therefore equally expressive as $\latt{G}$.
%    However, these circuits will generally contain many redundant gates: indeed, $\latt{G}$ embeds into both of them (both by construction and as a consequence of \cref{thm:main-thm}).
%\end{remark}

\iflongversion

\subsection{The {\primitivecircuit}}\label{subsec:primitive-circuit}
To close, we will construct another circuit with a given connectivity relation $G\subseteq\labelsA\times\labelsB$, denoted by $\cls{G}$, which generalises the circuit $\cls{\cccG}$ from \cref{ex:ccc}.
Dually to $\latt{G}$, it admits morphisms into (rather than from) all other circuits with connectivity $G$.
There is one obvious candidate for a generalisation of $\cls{\cccG}$: it is the circuit $(\labelsA\sqcup\labelsB, \leq, \iota_\labelsA,\iota_\labelsB)$ whose underlying set is the disjoint union $\labelsA\sqcup\labelsB$; whose order $\leq$ satisfies $p < q$ iff $p\in\labelsA, q\in\labelsB$ and $pGq$; and whose input and output maps are the inclusion maps $\iota_\labelsA:\labelsA\to\labelsA\sqcup\labelsB$ and $\iota_\labelsB:\labelsB\to\labelsA\sqcup\labelsB$, respectively.
For a general relation $G$, however, this circuit may contain unnecessarily many gates.
We would like $\cls{G}$ to be the \emph{smallest} circuit with connectivity $G$ admitting morphisms into all other circuits with the same connectivity, i.e.\ satisfying a condition analogous to \cref{thm:main-thm}\itmref{itm:main-thm-smallest}.
This requires taking some edge cases into account, leading us to the following definition.

\begin{definition}
    \label{def:cls-circuit}
    Let $G\subseteq\labelsA\times\labelsB$ and define
    \begin{align}
        \cAg &\coloneqq \{ \lA\in\labelsA \mid \card{\ChG(\lA)} \neq 1 \} \text{\quad and }\\
        \cBg &\coloneqq \{ \lB\in\labelsB \mid \card{\PaG(\lB)} \neq 1 \text{ or } \forall \lA\in\PaG(\lB) : \card{\ChG(\lA)} = 1 \}.
    \end{align}
    The \defn{\primitivecircuit} with connectivity $G$ is the circuit $\cls{G} = (\cAg\sqcup\cBg, \leq, \lap, \mup)$, where $\leq$ is the partial order satisfying $p<q$ iff $p \in \cAg, q\in \cBg$, and $p\relG q$; and where $\lap:\labelsA\to\cAg\sqcup\cBg$ and $\mup:\labelsB\to\cAg\sqcup\cBg$ are the maps defined by
    \begin{equation}
        \lap(\lA) = \begin{cases}
                        \lA & \text{ if } \lA\in\cAg \\ \lB & \text{ if } \ChG(\lA) = \{\lB\}
        \end{cases}
        \text{\quad and\quad}
        \mup(\lB) = \begin{cases}
                        \lB & \text{ if } \lB\in\cBg \\ \lA & \text{ if } \PaG(\lB) = \{\lA\}
        \end{cases}.
        \qedhere
    \end{equation}
%        \tvdl{trivial circuit? initial circuit? weak circuit? primitive circuit? basic circuit?}
\end{definition}

\begin{example}
    If $\labelsA = \{\lA_1,\lA_2,\lA_3,\lA_4,\lA_5\}$, $\labelsB = \{\lB_1,\lB_2,\lB_3,\lB_4,\lB_5\}$, and
    \begin{equation*}
        G\ = \tikzfig{figures/primitive/example-relation}, \quad\text{then}\quad \cls{G}\ = \tikzfig{figures/primitive/example-circuit}.
        \qedhere
    \end{equation*}
\end{example}

We have the following analogues of \cref{prop:trivial-endomorphism-group} and \cref{thm:main-thm}.
(\cref{prop:cls-trivial-endomorphism-group} is generally not satisfied by the circuit on $\labelsA\sqcup\labelsB$ discussed above \cref{def:cls-circuit}.)

\begin{proposition}
    \label{prop:cls-trivial-endomorphism-group}
    Each circuit endomorphism of $\cls{G}$ is trivial: $\End(\cls{G}) = \{\id_{\cls{G}}\}$.
\end{proposition}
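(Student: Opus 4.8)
The plan is to exploit the bipartite structure of $\cls{G}$ together with \cref{prop:morphisms-respect-plusminus}; the lattice-theoretic argument of \cref{prop:trivial-endomorphism-group} does not transfer, since $\cls{G}$ is generally not a lattice. Since the order on $\cls{G}$ satisfies $p<q$ only when $p\in\cAg$ and $q\in\cBg$, every element of $\cAg$ is minimal and every element of $\cBg$ is maximal. Writing $f:\cls{G}\to\cls{G}$ for an arbitrary endomorphism, the input- and output-respecting conditions applied to $\lA=a\in\cAg$ (where $\lap(a)=a$) and to $\lB=b\in\cBg$ (where $\mup(b)=b$) immediately give $f(a)\geq a$ and $f(b)\leq b$. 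By minimality and maximality, either $f(a)=a$ or $f(a)=b_0$ for some $b_0\in\cBg$ with $a\relG b_0$; and either $f(b)=b$ or $f(b)=a_0$ for some $a_0\in\cAg$ with $a_0\relG b$.

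First I would record the past- and future-sets of the gates of $\cls{G}$, in the notation of \cref{eq:ppluspminus}. A direct check of \cref{def:cls-circuit} gives, for $a\in\cAg$, that $a^-=\{a\}$ and $a^+=\ChG(a)$, and, for $b\in\cBg$, that $b^+=\{b\}$ and $b^-=\PaG(b)$. The two singletons are immediate from minimality/maximality together with the fact that $\lap$ and $\mup$ restrict to the identity on $\cAg$ and $\cBg$ respectively. The identity $a^+=\ChG(a)$ needs more care: a child $\lB\in\ChG(a)$ either lies in $\cBg$, contributing to $a^+$ via $\mup(\lB)=\lB\geq a$, or is excluded from $\cBg$, which (because $\card{\ChG(a)}\neq 1$) happens exactly when $a$ is its sole parent, in which case $\mup(\lB)=a$ still places $\lB\in a^+$; so every child contributes and $a^+=\ChG(a)$. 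The dual bookkeeping yields $b^-=\PaG(b)$.

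With these identities in hand, both halves follow from \cref{prop:morphisms-respect-plusminus}. Suppose $a\in\cAg$ but $f(a)=b_0\in\cBg$. Then $\ChG(a)=a^+\subseteq f(a)^+=b_0^+=\{b_0\}$, forcing $\card{\ChG(a)}\leq 1$; but $a\in\cAg$ means $\card{\ChG(a)}\neq 1$, so $\ChG(a)=\emptyset$, contradicting $a\relG b_0$. Hence $f(a)=a$ for all $a\in\cAg$. Dually, suppose $b\in\cBg$ but $f(b)=a_0\in\cAg$. Then $\PaG(b)=b^-\subseteq f(b)^-=a_0^-=\{a_0\}$, so $\PaG(b)=\{a_0\}$; but a one-parent element of $\cBg$ must, by \cref{def:cls-circuit}, have its parent satisfy $\card{\ChG(a_0)}=1$, contradicting $a_0\in\cAg$. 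Hence $f(b)=b$ for all $b\in\cBg$, and therefore $f=\id_{\cls{G}}$.

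The main obstacle is not the final contradiction-hunting but the careful verification of $a^+=\ChG(a)$ and $b^-=\PaG(b)$: one must track precisely how the edge-case clauses in the definitions of $\cBg$, $\lap$, and $\mup$ absorb ``private'' children and parents into the output and input maps rather than leaving them as separate gates. The asymmetry between the defining conditions of $\cAg$ and $\cBg$ (the latter carrying an extra disjunct) is exactly what forces the $\cBg$-half of the argument to invoke the single-child clause in place of the cardinality bound used for the $\cAg$-half, so the two halves are genuinely dual but not verbatim identical.
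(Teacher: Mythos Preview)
Your proof is correct and follows essentially the same approach as the paper's: both start from $f(a)\geq a$ and $f(b)\leq b$, assume strict inequality, and derive the contradiction $\card{\ChG(a)}=1$ (respectively $\card{\ChG(a_0)}=1$) against membership in $\cAg$. The only difference is packaging: you compute $a^\pm$ and $b^\pm$ once and for all and then invoke \cref{prop:morphisms-respect-plusminus}, whereas the paper performs the equivalent order-chasing inline (taking an arbitrary $b'\in\ChG(a)$, using connectivity to get $\lap(a)\leq\mup(b')$, applying $f$, and using maximality to force $b'=b$).
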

\begin{proof}
    Let $f:\cls{G}\to\cls{G}$ be a circuit morphism and let $\lA\in\cAg$.
    We show that $f(\lA) = \lA$.
    First of all, we have $f(\lA) = f(\lap(\lA)) \geq \lap(\lA) = \lA$.
    Suppose for contradiction that this inequality is strict: that $f(\lA) > \lA$.
    By construction of $\cls{G}$, this means that $f(\lA) = \lB$ for some $\lB\in\ChG(\lA)\cap\cBg$.
    Now, take any $\lB'\in\ChG(\lA)$.
    Then $\lap(\lA) \leq \mup(\lB')$; this is a consequence of the fact that $\cls{G}$ has connectivity $G$, which can be straightforwardly verified.
    We therefore have $\lB = f(\lA) = f(\lap(\lA)) \leq f(\mup(\lB')) \leq \mup(\lB')$.
    Since $\lB$ is by construction a maximal element of $\cls{G}$, we must in fact have $\lB = \mup(\lB')$, which implies $\lB = \lB'$.
    We conclude that $\card{\ChG(\lA)} = 1$, contradicting the assumption that $\lA\in\cAg$.
    Therefore $f(\lA) = \lA$.

    Next, suppose that $\lB\in\cBg$. We wish to show that $f(\lB) = \lB$.
    To begin with, we have $f(\lB) = f(\mup(\lB)) \leq \mup(\lB) = \lB$.
    Suppose that in fact $f(\lB) < \lB$.
    Then $f(\lB) = \lA$ for some $\lA\in\PaG(\lB)\cap\cAg$.
    Take any $\lA' \in \PaG(\lB)$; we have $\lap(\lA') \leq \mup(\lB)$, so $\lap(\lA') \leq f(\lap(\lA')) \leq f(\mup(\lB)) = f(\lB) = \lA$, which by construction means that $\lA' = \lA$.
    Therefore, $\card{\PaG(\lB)} = 1$.
    Since $\lB\in\cBg$ by assumption, we must have $\card{\ChG(\lA)} = 1$.
    This is in contradiction with the fact that $\lA\in\cAg$.
    Therefore $f(\lB) = \lB$.
\end{proof}

\begin{theorem}
    \label{thm:primitive-theorem}
    Let $\labelsA$ and $\labelsB$ be sets and $G\subseteq\labelsA\times\labelsB$ a relation.
    Then $\cls{G}$ is, up to circuit isomorphisms, the unique circuit from $\labelsA$ to $\labelsB$ satisfying the following three properties.
    \begin{enumerate}[(i)]
        \item\label{itm:cls-thm-connectivity} $\cls{G}$ has connectivity $G_{\cls{G}} = G$.
        \item\label{itm:cls-thm-rewrite} For any circuit $P$ with connectivity $G_P \supseteq G$, there exists a morphism $f: \cls{G} \to P$.
        \item\label{itm:cls-thm-smallest} If $Q$ is any other circuit satisfying (\ref{itm:cls-thm-connectivity}) and (\ref{itm:cls-thm-rewrite}) then there is an injective morphism $\cls{G} \to Q$.
    \end{enumerate}
    In fact, for any circuit $Q$ satisfying \itmref{itm:cls-thm-connectivity} and \itmref{itm:cls-thm-rewrite} and morphism $f:\cls{G} \to Q$, $f$ is an embedding.
%        This last bit is a little bit disanalogous to \cref{thm:main-thm}!
\end{theorem}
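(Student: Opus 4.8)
The plan is to mirror the proof of \cref{thm:main-thm}, replacing \cref{prop:trivial-endomorphism-group} by the dual \cref{prop:cls-trivial-endomorphism-group} and replacing the construction of \cref{prop:any-lattice-admits-incoming-morphisms} by an explicit \emph{outgoing} morphism. First I would settle \itmref{itm:cls-thm-connectivity}, which was already invoked inside the proof of \cref{prop:cls-trivial-endomorphism-group}: one verifies directly that $\lap(\lA)\leq\mup(\lB)\iff\lA\relG\lB$ by a four-way case analysis according to whether $\lA\in\cAg$ and whether $\lB\in\cBg$. The only genuine subtlety is that $\lap,\mup$ must land in $\cAg\sqcup\cBg$ at all: when $\ChG(\lA)=\{\lB\}$ one checks $\lB\in\cBg$, and when $\PaG(\lB)=\{\lA\}$ with $\lB\notin\cBg$ one checks $\lA\in\cAg$; both follow by unwinding the (deliberately asymmetric) defining conditions of $\cBg$. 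In the crucial case $\lA\notin\cAg$ and $\lB\notin\cBg$, one sees that $\lap(\lA)\in\cBg$ is maximal while $\mup(\lB)\in\cAg$ is minimal, so neither $\lap(\lA)\leq\mup(\lB)$ nor $\lA\relG\lB$ can hold, matching the two sides.

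Next, for \itmref{itm:cls-thm-rewrite}, given a circuit $P$ with $G_P\supseteq G$ I would define $f:\cls{G}\to P$ on the two parts of the underlying set by $f(\lA)=\la_P(\lA)$ for $\lA\in\cAg$ and $f(\lB)=\mu_P(\lB)$ for $\lB\in\cBg$. Order-preservation is immediate, since a strict relation $p<q$ in $\cls{G}$ forces $p\in\cAg$, $q\in\cBg$ and $p\relG q$, whence $p\,G_P\,q$ and $\la_P(p)\leq\mu_P(q)$. That $f$ respects inputs and outputs is again a short case analysis using the same edge-case facts together with $G\subseteq G_P$: for instance, if $\lA\notin\cAg$ then $\lap(\lA)$ is the unique child $\lB\in\cBg$ of $\lA$, and $f(\lap(\lA))=\mu_P(\lB)\geq\la_P(\lA)$ because $\lA\,G_P\,\lB$; the output condition is handled dually.

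The remaining two claims are then purely formal and dual to \cref{thm:main-thm}. For the final embedding statement, given a circuit $Q$ satisfying \itmref{itm:cls-thm-connectivity} and \itmref{itm:cls-thm-rewrite} and a morphism $f:\cls{G}\to Q$, I would apply \itmref{itm:cls-thm-rewrite} for $Q$ (whose connectivity $G_Q=G\supseteq G$) to the circuit $\cls{G}$ to obtain a morphism $g:Q\to\cls{G}$; then $g\circ f\in\End(\cls{G})$ equals $\id_{\cls{G}}$ by \cref{prop:cls-trivial-endomorphism-group}, so $f$ has an order-preserving left inverse and is therefore an order embedding, hence a circuit embedding. Applying this to the morphism $\cls{G}\to Q$ furnished by \itmref{itm:cls-thm-rewrite} yields \itmref{itm:cls-thm-smallest}. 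For uniqueness, if $K$ also satisfies \itmref{itm:cls-thm-connectivity}--\itmref{itm:cls-thm-smallest}, then \itmref{itm:cls-thm-smallest} for $\cls{G}$ and for $K$ (both applicable, since each of $\cls{G}$ and $K$ satisfies \itmref{itm:cls-thm-connectivity} and \itmref{itm:cls-thm-rewrite}) give injective morphisms $f:\cls{G}\to K$ and $g:K\to\cls{G}$; their composite $g\circ f$ is $\id_{\cls{G}}$, forcing $g$ to be bijective with circuit-morphism inverse $f$, i.e.\ a circuit isomorphism. The main obstacle is not any of these formal steps but the bookkeeping in the first two paragraphs: the asymmetry between $\cAg$ and $\cBg$ is precisely what guarantees that for a one-to-one connected pair $\lA,\lB$ (with $\ChG(\lA)=\{\lB\}$ and $\PaG(\lB)=\{\lA\}$) exactly one gate survives, so that connectivity is preserved while $\cls{G}$ remains genuinely minimal.
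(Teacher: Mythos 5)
Your proposal is correct and takes essentially the same route as the paper: the identical map $f$ (sending $\cAg$ via $\la_P$ and $\cBg$ via $\mu_P$) for property \itmref{itm:cls-thm-rewrite}, and the same formal argument via $\End(\cls{G})=\{\id_{\cls{G}}\}$ for the embedding, minimality, and uniqueness claims. The paper dismisses \itmref{itm:cls-thm-connectivity} and the formal steps as straightforward or analogous to \cref{thm:main-thm}, whereas you spell them out; the case analysis and edge-case checks you supply are accurate.
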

\begin{proof}
    We only show that \itmref{itm:cls-thm-rewrite} holds for $\cls{G}$.
    \itmref{itm:cls-thm-connectivity} can be straightforwardly verified and all other claims of the theorem follow in a way similar to the proof of \cref{thm:main-thm}, now using \cref{prop:cls-trivial-endomorphism-group} instead of \cref{prop:trivial-endomorphism-group}.

    To show \itmref{itm:cls-thm-rewrite}, let $P$ be a circuit with connectivity $G_P \supseteq G$.
    Define the map $f: \cls{G} \to P$ by
    \begin{align*}
        f(\lA) &\coloneqq \la_P(\lA)  \text{\quad for } \lA\in\cAg; \\
        f(\lB) &\coloneqq \mu_P(\lB)  \text{\quad for } \lB\in\cBg.
    \end{align*}
    It follows from the construction of $\cls{G}$ and the connectivity of $P$ that $f$ is order-preserving.
    $f$ also respects inputs: if $\lA\in\cAg$ then $f(\lap(\lA)) = f(\lA) = \la_P(\lA)$, while if $\lA\in\labelsA\setminus\cAg$, then $\ChG(\lA) = \{\lB\}$ for some $\lB\in\cBg$, so that $f(\lap(\lA)) = f(\lB) = \mu_P(\lB) \geq \la_P(\lA)$.
    The final inequality here is due to the connectivity of $P$.
    Similarly, $f(\mup(\lB))\leq \mu_P(\lB)$ for all $\lB\in\labelsB$.
    In conclusion, $f$ is a circuit morphism $\cls{G}\to P$.
\end{proof}

Finally, analogously to \cref{cor:concept-lattice}, we can summarise the implications of the above theorem for the preorder on $\CircuitPreOrder$ as follows.
\begin{corollary}\label{cor:basic-circuit}
    For any circuit $P$ and relation $G\subseteq \labelsA\times\labelsB$, we have
    \begin{equation}
        G\subseteq G_P \iff \cls{G} \preceq P.
    \end{equation}
    In other words, the map $\cls{(-)} : \powset(\labelsA\times\labelsB) \to \CircuitPreOrder$ is order-preserving and forms a lower Galois adjoint to $G_{(-)} : \CircuitPreOrder \to \powset(\labelsA\times\labelsB)$.
\end{corollary}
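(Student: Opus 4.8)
The plan is to prove the biconditional $G \subseteq G_P \iff \cls{G} \preceq P$ directly from \cref{thm:primitive-theorem}, and then observe that the second sentence is a purely formal restatement. For the forward direction, assuming $G \subseteq G_P$ means that $P$ is a circuit whose connectivity satisfies $G_P \supseteq G$, so \cref{thm:primitive-theorem}\itmref{itm:cls-thm-rewrite} furnishes a morphism $\cls{G} \to P$; by \cref{def:preorder} this is exactly the statement $\cls{G} \preceq P$.

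For the converse, I would use the monotonicity of connectivity recorded in \cref{def:connectivity}: if $\cls{G} \preceq P$, then a morphism $\cls{G}\to P$ exists, whence $G_{\cls{G}} \subseteq G_P$; and since $G_{\cls{G}} = G$ by \cref{thm:primitive-theorem}\itmref{itm:cls-thm-connectivity}, this gives $G \subseteq G_P$. Together the two implications establish the equivalence.

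To obtain the ``in other words'' clause, I would note that the equivalence $\cls{G} \preceq P \iff G \subseteq G_P$ is precisely the defining condition of an adjunction of monotone maps between the preorders $\powset(\labelsA\times\labelsB)$ and $\CircuitPreOrder$, with $\cls{(-)}$ as the lower (left) adjoint and $G_{(-)}$ as the upper (right) adjoint---dually to how \cref{cor:concept-lattice} exhibits $\latt{(-)}$ as the upper adjoint to the same map. Order-preservation of $\cls{(-)}$ is then automatic: instantiating $P = \cls{G'}$ in the biconditional gives $\cls{G} \preceq \cls{G'} \iff G \subseteq G_{\cls{G'}} = G'$, so $G \subseteq G'$ forces $\cls{G} \preceq \cls{G'}$.

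I expect no genuine obstacle: the corollary merely repackages \cref{thm:primitive-theorem} together with the monotonicity of $G_{(-)}$, exactly mirroring the relationship between \cref{cor:concept-lattice} and \cref{thm:main-thm}. The one point to state carefully is that $\CircuitPreOrder$ is a preorder rather than a partial order, so the term ``Galois adjoint'' must be read in the preorder sense (an adjunction of monotone maps, with no antisymmetry assumed); this causes no difficulty, since the biconditional characterising the adjunction never invokes antisymmetry.
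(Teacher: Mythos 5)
Your proof is correct and matches the paper's (implicit) argument exactly: the forward direction is \cref{thm:primitive-theorem}\itmref{itm:cls-thm-rewrite}, the converse combines the monotonicity of $G_{(-)}$ from \cref{def:connectivity} with $G_{\cls{G}}=G$, and the adjunction clause is the standard formal repackaging. No gaps.
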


Combining \cref{cor:concept-lattice,cor:basic-circuit}, we see that for any circuit $P$ and relation $G$,
\begin{equation}
    G_P = G  \iff  \cls{G} \preceq P \preceq \latt{G}.
\end{equation}

\fi
    \section{Closing remarks}\label{sec:conclusion}
\paragraph{Relation to other works}
While the treatment of input-output connectivity and the resulting introduction of lattice theory into the study of circuit syntax is, as far as I am aware, new to this work, the combinatorics of circuit syntax more generally has of course been investigated previously.
Most other approaches are based on (directed) graph-like structures, rather than partial orders.
\textcite{Bon+22a}, for example, present a comprehensive formalism for string diagram rewrites based on discrete cospans of directed hypergraphs, establishing a formal equivalence with string diagrams in free symmetric monoidal categories.
Their requirement that rewrites be \emph{convex} seems to parallel the emergence of \emph{compatible} congruences as kernels of circuit morphisms in this work.

Another order-theoretic framework similar to the one here was proposed in recent independent work of~\textcite{SalzS24}.
Their definitions of circuits and morphisms (there called \emph{framed partial orders} and \emph{frame- and order-preserving maps}) are distinct from, but seem for most practical purposes equivalent to our definitions in \cref{sec:first-definitions}.
%    (Nevertheless, the ensuing categories are not obviously equivalent.)
(Precisely, for any given pair of sets $\labelsA,\labelsB$ the category proposed in~\textcite{SalzS24} is isomorphic to a strict subcategory of our circuits and circuit morphisms; but the approaches become equivalent after quotienting by syntactical equivalence.)
Like us, they are motivated by considering the (in)existence of morphisms as a means of comparing expressivity of circuit shapes.
The primary focus of~\cite{SalzS24} is on the notion of syntactical equivalence and the interaction of circuits with spacetime structure.
In particular, the authors show, restricting to the finite case, that every equivalence class $[P]$ contains a circuit $S\in[P]$ such that all morphisms $S\to P'\in[P]$ are embeddings; they call $S$, which is unique up to isomorphisms, the \emph{minimal representative} of the class $[P]$.
This result also holds in the formalism proposed here,
\iflongversion
and from~\cref{thm:main-thm,thm:primitive-theorem} we see that $\latt{G}$ and $\cls{G}$ are particular examples of such minimal representatives.
\else
and from~\cref{thm:main-thm} we see that $\latt{G}$ is a particular example of such a minimal representative.
\fi

\paragraph{Future directions}
The results presented here have relevance to quantum causality.
\Cref{thm:main-thm} shows that the concept lattice acts as a canonical shape for causal decompositions of unitary transformations, studied in~\cite{LB21,VMA25b,pic}.
By doing so it provides the syntax of the circuit decomposition that one should be looking for; what remains is to find appropriate semantics in terms of quantum systems and quantum gates.
This problem boils down to a close interplay between order theory and finite-dimensional operator algebra, exemplified by the characterisation of~\cite{pic}.

There also remain interesting questions outside the context of causal decompositions and circuit connectivity.
A natural problem is to characterise the set of finite circuits up to syntactical equivalence for given finite sets $\labelsA$, $\labelsB$ and to study its order- and category-theoretic structure.
A beginning has been made in~\cite{SalzS24}, which characterises this set in cases where $\labelsA$ and $\labelsB$ have small numbers of elements.
Also the relation between syntactical inequivalence of circuits and their comparative expressivity in specific SMCs such as classical and quantum theory should be investigated in more detail.
%    Finally, an important question in quantum causality is whether circuit inequivalence in the syntactical sense described here is sufficient for inequivalence in terms of quantum channel expressibility.
%    \tvdl{Also, is the category of circuits (co)complete, and $\CircEqAB$ a lattice?}

\paragraph{Acknowledgements}
I am grateful to Robin Lorenz, Aleks Kissinger, Sean Tull, Gabriele Tedeschi, Matthias Salzger, Elie Wolfe, Bregt van der Lugt, and the organisers and participants of \emph{Causalworlds 2024} for useful discussions regarding this work and comments on early versions.

    \printbibliography
%    \bibliographystyle{quantum}
%    \bibliography{bibliography}
\end{document}